\def\ARXIVVERSION{1}
\documentclass[sigconf]{acmart} %% CCS: DO NOT REMOVE
\newif\ifarxiv
\ifdefined\ARXIVVERSION
  \arxivtrue
\else
  \arxivfalse
\fi

\usepackage{filecontents}

\usepackage{amsmath}
\usepackage{booktabs,tabularx,caption}
\usepackage{listings}
\usepackage{listings-rust}%
\usepackage{xcolor}
\usepackage{algorithmicx}%
\usepackage{algorithm}
\usepackage[noend]{algpseudocode}%
\usepackage{pifont}
\usepackage{enumitem}
\usepackage{colortbl} % For coloring table cells
\usepackage{tikz}[node]%
\usepackage{wrapfig}
\usepackage{tcolorbox}%
\usepackage{colortbl}
\usepackage[table]{xcolor}
\usepackage[nomessages]{fp}% http://ctan.org/pkg/fp
\usepackage{xfp}%
\usepackage{placeins}%
\usepackage{xspace}
\usepackage{adjustbox}
\usepackage{ulem}%
\usepackage{amsthm}
\usepackage{hyperref}
\usepackage{ulem}
\usepackage{stmaryrd}
\usepackage{mathrsfs}
\usepackage{multirow}

\usepackage{mathpartir}

\usepackage{enumitem}
\usepackage{bbm}
\usepackage{footnote}

\newcommand{\sys}{\textsc{ZEBRA}\xspace}

\newcommand{\nzkvmreal}{5\xspace}

\newcommand{\nfixed}{3\xspace}
\newcommand{\nconfirmed}{6\xspace}
\newcommand{\nbugsrealworld}{11\xspace}

\newcommand{\isreal}{\mathfrak{s}}
\newcommand{\opa}{\mathfrak{a}}
\newcommand{\opb}{\mathfrak{b}}
\newcommand{\opc}{\mathfrak{c}}
\newcommand{\programcounter}{\mathfrak{pc}}
\newcommand{\nextprogramcounter}{\mathfrak{pc'}}

\newtheorem{theorem}{Theorem}[section]
\newtheorem{lemma}[theorem]{Lemma}

\newtheorem{definition}[theorem]{Definition}

\newcommand{\rangeverification}{input range verification\xspace}

\colorlet{critcol}{red!22}  
\colorlet{highcol}{orange!28}
\colorlet{medcol}{yellow!28}
\colorlet{infocol}{cyan!18} 
\colorlet{obscol}{black!11}
\newcommand{\sevbadge}[2]{\colorbox{#1}{\small\bfseries\strut\kern3pt#2\kern3pt}}
\newcommand{\Crit}{\sevbadge{critcol}{Critical}}
\newcommand{\High}{\sevbadge{highcol}{High}}
\newcommand{\Medium}{\sevbadge{medcol}{Medium}}
\newcommand{\Obs}{\sevbadge{obscol}{Design Flaw}}
\newcommand{\Info}{\sevbadge{infocol}{Design Choice}}

\lstdefinelanguage{Lean}{
  morekeywords={
    theorem, lemma, def, structure, abbrev, namespace, end, import,
    fun, let, in, if, then, else, match, with, do, return,
    Prop, Type, Sort, by, intro, exact, rw, simp, apply, refine,
    constructor, cases, induction, rfl, where
  },
  morekeywords=[2]{
    Config, EventEncoding, EventSet, Trace, Row, Finset, Function,
    TableEncodesEvents, TableGeneratorFaithful, ALU
  },
  sensitive=true,
  morecomment=[l]{--},
  morecomment=[s]{/-}{-/},
  morestring=[b]",
  literate=
    {↔}{{$\leftrightarrow$}}1
    {→}{{$\to$}}1
    {∀}{{$\forall$}}1
    {∃}{{$\exists$}}1
    {∈}{{$\in$}}1
    {∧}{{$\wedge$}}1
    {₁}{{$_1$}}1
    {₂}{{$_2$}}1
}

\newcommand{\hideaki}[1]{\textcolor{black}{#1}}

\AtBeginDocument{%
  }

\setcopyright{acmlicensed}
\copyrightyear{2026}
\acmYear{2026}
\setcopyright{cc}
\setcctype{by}
\acmConference[CCS '26]{Proceedings of the 2026 ACM SIGSAC Conference on Computer and Communications Security}{November 15--19, 2026}{The Hague, Netherlands}
\acmBooktitle{Proceedings of the 2026 ACM SIGSAC Conference on Computer and Communications Security (CCS '26), November 15--19, 2026, The Hague, Netherlands}
\acmDOI{10.1145/3830454.3846593}
\acmISBN{979-8-4007-2871-6/2026/11}

\begin{document}

%%
%% The "title" command has an optional parameter,
%% allowing the author to define a "short title" to be used in page
%% headers.

\title{Efficient Branch-and-Bound Testing and Verification of zkVMs}%% CCS: you MUST provide a title

%%
%% The "author" command and its associated commands are used to define
%% the authors and their affiliations.
%% Of note is the shared affiliation of the first two authors, and the
%% "authornote" and "authornotemark" commands
%% used to denote shared contribution to the research.

%% CCS: at submission time, the submission MUST be anonymized. Hence
%% authors MUST be commented out.

% \author{Ben Trovato}
% \authornote{Both authors contributed equally to this research.}
% \email{trovato@corporation.com}
% \orcid{1234-5678-9012}
% \author{G.K.M. Tobin}
% \authornotemark[1]
% \email{webmaster@marysville-ohio.com}
% \affiliation{%
%   \institution{Institute for Clarity in Documentation}
%   \city{Dublin}
%   \state{Ohio}
%   \country{USA}
% }

\author{Hideaki Takahashi}
\affiliation{%
    \institution{Columbia University}
    \city{New York}
    \country{USA}}
\email{ht2673@columbia.edu}

\author{Suman Jana}
\affiliation{%
    \institution{Columbia University}
    \city{New York}
    \country{USA}}
\email{suman@cs.columbia.edu}

\author{Junfeng Yang}
\affiliation{%
    \institution{Columbia University}
    \city{New York}
    \country{USA}}
\email{junfeng@cs.columbia.edu}

% \author{Valerie B\'eranger}
% \affiliation{%
%   \institution{Inria Paris-Rocquencourt}
%   \city{Rocquencourt}
%   \country{France}
% }

% \author{Aparna Patel}
% \affiliation{%
%  \institution{Rajiv Gandhi University}
%  \city{Doimukh}
%  \state{Arunachal Pradesh}
%  \country{India}}

% \author{Huifen Chan}
% \affiliation{%
%   \institution{Tsinghua University}
%   \city{Haidian Qu}
%   \state{Beijing Shi}
%   \country{China}}

% \author{Charles Palmer}
% \affiliation{%
%   \institution{Palmer Research Laboratories}
%   \city{San Antonio}
%   \state{Texas}
%   \country{USA}}
% \email{cpalmer@prl.com}

% \author{John Smith}
% \affiliation{%
%   \institution{The Th{\o}rv{\"a}ld Group}
%   \city{Hekla}
%   \country{Iceland}}
% \email{jsmith@affiliation.org}

% \author{Julius P. Kumquat}
% \affiliation{%
%   \institution{The Kumquat Consortium}
%   \city{New York}
%   \country{USA}}
% \email{jpkumquat@consortium.net}

%%
%% By default, the full list of authors will be used in the page
%% headers. Often, this list is too long, and will overlap
%% other information printed in the page headers. This command allows
%% the author to define a more concise list
%% of authors' names for this purpose.
%\renewcommand{\shortauthors}{Trovato et al.}

%%
%% The abstract is a short summary of the work to be presented in the
%% article.
\begin{abstract} %% CCS: an abstract MUST be provided.

Zero-knowledge virtual machines (zkVMs) enable verifiable execution of general-purpose programs by translating virtual machine semantics into algebraic constraints over execution traces. The correctness of these constraints is critical: a single missing or incorrect constraint can admit forged proofs (under-constrained) or reject valid executions (over-constrained). Existing approaches do not provide meaningful guarantees at production scale: fuzzers and unit tests often miss bugs, SMT solvers struggle with the size and non-linearity of constraints, and theorem provers require substantial manual effort.

We present \sys, a fully automated verification and bug-detection framework grounded in a precise correctness criterion: for a given program and input, the constraint system must admit exactly one valid execution trace—no more (soundness) and no fewer (completeness). This reduces zkVM verification to a solution-set cardinality problem over a canonical trace space, where redundancies such as null-row padding and non-deterministic permutations are eliminated prior to counting.

To compute cardinality tractably, \sys lifts analysis from finite-field witnesses to an integer interval lattice, exploiting a structural sparsity property of zkVM constraints: across \nzkvmreal real-world zkVMs, constraints utilize only 14.0\% of their theoretical connectivity capacity on average. This sparsity enables tight interval propagation with limited over-approximation error. \sys performs a parallel branch-and-bound search that, unlike fuzzing, either produces a concrete counterexample or certifies the absence of violations within a bounded input region.

We evaluate \sys on five real-world zkVMs. \sys discovers \nbugsrealworld{} previously unknown vulnerabilities, including critical flaws enabling control-flow hijacking and proof forgery—none detected by a state-of-the-art fuzzer; \nconfirmed{} have already been independently confirmed and \nfixed{} have been fixed by developers. Compared to SMT-based verification, \sys is $51.5\times$ faster, verifies $16.5$ percentage point more instances, and its range verification provides up to $63\times$ efficiency gain over repeated single-input verification.
\end{abstract}

%mention that fuzzers cannot even execute point-wise verification

%input region

%of which \nconfirmed were confirmed by developers and \nfixed fixed

%%
%% The code below is generated by the tool at http://dl.acm.org/ccs.cfm.
%% Please copy and paste the code instead of the example below.
%%
\begin{CCSXML}
<ccs2012>
   <concept>
       <concept_id>10002978.10003022.10003023</concept_id>
       <concept_desc>Security and privacy~Software security engineering</concept_desc>
       <concept_significance>500</concept_significance>
       </concept>
   <concept>
       <concept_id>10002978.10002979.10002983</concept_id>
       <concept_desc>Security and privacy~Cryptanalysis and other attacks</concept_desc>
       <concept_significance>300</concept_significance>
       </concept>
 </ccs2012>
\end{CCSXML}

\ccsdesc[500]{Security and privacy~Software security engineering}
\ccsdesc[300]{Security and privacy~Cryptanalysis and other attacks}

%\ccsdesc[500]{Do Not Use This Code~Generate the %Correct Terms for Your Paper}
% \ccsdesc[300]{Do Not Use This Code~Generate the Correct Terms for Your Paper}
% \ccsdesc{Do Not Use This Code~Generate the Correct Terms for Your Paper}
% \ccsdesc[100]{Do Not Use This Code~Generate the Correct Terms for Your Paper}

%%
%% Keywords. The author(s) should pick words that accurately describe
%% the work being presented. Separate the keywords with commas.
\keywords{Software Testing, Zero-Knowledge Proof, zkVM, Verification} %% CCS: DO NOT REMOVE but you MAY update

% \received{20 February 2007} 
% \received[revised]{12 March 2009}
% \received[accepted]{5 June 2009}

%%
%% This command processes the author and affiliation and title
%% information and builds the first part of the formatted document.
\maketitle

%% CCS: You MAY change the title and,
                       %% obviously, add text, sections, figures,
                       %% tables, etc. 
%% CCS: For help and more latex examples, refer to
%% `sample-sigconf.tex', provided in the distribution
%% https://portalparts.acm.org/hippo/latex_templates/acmart-primary.zip 
%%

%%
%% The acknowledgments section is defined using the "acks" environment
%% (and NOT an unnumbered section). This ensures the proper
%% identification of the section in the article metadata, and the
%% consistent spelling of the heading.

%% CCS: to preserve anonymity, NO acknowledgements to fundings, projects or persons should be used at
%% submission time
%% CCS: this section MAY be used to acknowledge the use of AI when used only for minor editorial improvements (e.g., grammar, spelling, or light style polishing) 
% \begin{acks}
% This paper was edited for grammar using [Tool Name].
% \end{acks}

\section{Introduction}
Zero-knowledge proofs (ZKPs) enable verifiable computation by reducing program execution to the satisfaction of algebraic constraints over a finite field. This process, arithmetization, has traditionally been performed manually using domain-specific languages such as Circom~\cite{belles2022circom} and Noir~\cite{aztec2024noir}, a workflow that is both labor-intensive and error-prone: even minor mismatches between intended semantics and generated constraints can produce severe soundness or completeness violations~\cite{wen2024practical,takahashi2025zkfuzz,pailoor2023automated}.

\textbf{zkVMs \& the shifting trust boundary.} Zero-Knowledge Virtual Machines (zkVMs) address the above-mentioned issue by fixing a universal arithmetization at the level of a virtual machine, eliminating the need for per-program circuit construction~\cite{dokchitser2023zero}. Developers write programs in standard languages (e.g., Rust, C++, Go), which are compiled to a conventional ISA, typically RISC-V. Program execution produces a trace of machine states; the zkVM verifies that this trace satisfies a fixed constraint system encoding valid state transitions. Arithmetization is thus performed only once for the VM semantics and can be reused across arbitrary programs. zkVMs have gained rapid adoption in blockchain, machine learning, and supply-chain verification~\cite{hassanzadeh2025constraint,benno2026jolt,wang2026zkagent,ron2026verifiable}.

 This shift in abstraction fundamentally changes the trust boundary: correctness now depends entirely on the zkVM's arithmetization, the translation of virtual machine logic (CPU state transitions, memory consistency, ALU operations) into a polynomial constraint system over a finite field~\cite{ke2025consistency,kwan2024verifying}. Concretely, for a fixed program and input, this system must admit exactly the valid execution traces, no more (soundness) and no fewer (completeness). In practice, these constraints operate over large two-dimensional trace tables with thousands of cells and non-linear constraints, expressed via Algebraic Intermediate Representation (AIR) and lookup arguments~\cite{dokchitser2023zero}. Even a single missing or incorrect constraint can introduce critical vulnerabilities: under-constrained systems admit forged proofs, while over-constrained systems reject valid executions. Real-world zkVMs have exhibited both failure modes; for instance, an under-constrained bug in RISC Zero was severe enough to warrant a \$50{,}000 bounty~\cite{RiscZero2025SecurityDisclosure}.

\textbf{Limitations of existing approaches.} Existing approaches to securing zkVM constraints suffer from fundamental limitations. Dynamic methods such as unit testing and fuzzing uncover shallow bugs by sampling concrete inputs~\cite{hochrainer2025arguzz}, but offer no completeness guarantee: subtle constraint violations hidden in the vast state space of a large prime field often remain undetected, as shown in our evaluation. Formal verification via {SMT solvers is precise but brittle}~\cite{ke2025consistency}. Interactive theorem provers such as Lean 4 can validate constraints with high assurance~\cite{kwan2024verifying}, but require substantial manual effort and specialized expertise. Therefore, no existing method simultaneously achieves scalability, automation, and high semantic coverage over the full constraint system, {while ZEBRA trades precision for scalable region reasoning with a fully automated pipeline.}

\noindent\textbf{Our approach.}
We reduce zkVM verification to a solution-set cardinality estimation problem and solve it using interval-based branch-and-bound over a canonicalized trace space. For a fixed program~$P$ and input~$x$, let $\mathcal{C}(P,x,T)$ denote the constraint system evaluated on trace~$T$. We define the \textit{solution set}
\[
  \mathcal{S}(P,x) \;=\; \bigl\{\,T \;\big|\; \mathcal{C}(P,x,T)\,\bigr\}.
\]
We find that correctness of zkVM constraints reduces to a cardinality condition on~$\mathcal{S}$: the system is \emph{complete} if $|\mathcal{S}|\ge 1$, \textit{sound} if
$|\mathcal{S}|\le 1$, and fully correct iff $|\mathcal{S}|=1$. Under-constrained bugs manifest as $|\mathcal{S}|>1$; over-constrained bugs as $|\mathcal{S}|=0$. This formulation provides a unified, precise criterion for zkVM correctness.

We classify the cardinality of $\mathcal{S}(P,x)$ via the following three components. Rather than computing $|\mathcal{S}|$ exactly, we classify regions of the search space as empty, singleton, or ambiguous, enabling early certification without exhaustive enumeration.

\smallskip
\noindent\textbf{Canonicalized table space.}
We eliminate representational redundancy in execution traces by introducing a canonicalization layer that maps each trace table to a unique canonical representation. Without canonicalization, padding, row permutations, or unconstrained witness cells can cause the same execution to be counted multiple times, rendering solution-set cardinality meaningless.

The central correctness requirement is injectivity: distinct canonical representations must correspond to distinct executions, so that $|\mathcal{S}|$ faithfully reflects the number of distinct executions. A key structural observation is that real-world zkVM designs admit canonicalizers of a simple and uniform form—in practice requiring on average fewer than $130$ lines of code per zkVM. Across five independent zkVM implementations, we formally verify the injectivity of all canonicalizers in Lean~4,\footnote{We verify these properties at the algorithmic specification level; verifying full implementation correctness in Lean is orthogonal and left to future work.} establishing that canonicalization is well-defined over equivalence classes of traces and preserves solution-set cardinality.

\smallskip
\noindent\textbf{Sound interval propagation.}
We evaluate constraints over interval ranges instead of exact finite-field values, soundly over-approximating all feasible executions. Modular arithmetic creates the main challenge: wrap-around at the field characteristic breaks standard interval behavior, so we design conservative propagation rules that preserve soundness. During branch-and-bound search, if propagation eliminates all interval assignments, then $\mathcal{S}(P,x) = \emptyset$; if all remaining assignments collapse to a singleton region, then $|\mathcal{S}(P,x)| = 1$.

Our interval propagation benefits from the sparsity of zkVM constraint systems, which designers maintain to reduce verification cost. Because each constraint touches only a small subset of variables, updates propagate locally and introduce limited over-approximation error. Consistent with this, real-world zkVMs exhibit low connectivity: their density $\rho = T_{\mathrm{arith}}(\mathcal{C}) / (|\mathcal{C}| \times m)$—measuring the fraction of variables touched per constraint relative to a dense system—averages $0.140$ across different backends (Appendix~\ref{appendix:sp}). Branch-and-bound refinement then reduces the remaining error by partitioning the trace space, enabling aggressive pruning while preserving soundness.

\smallskip
\noindent\textbf{Parallel branch-and-bound certification.}
We use interval propagation to guide a branch-and-bound search over the trace lattice, certifying solution-set cardinality. Each branch splits the trace space into independent subproblems, enabling natural parallelism in contrast to SMT solvers. The same refinement also tightens interval bounds, reduces over-approximation error, and accelerates convergence. This design supports localized verification: by restricting inputs to a bounded region, \sys{} can certify correctness within that region without requiring a global proof over all inputs. If the search terminates with $|\mathcal{S}| = 1$ it certifies local correctness; otherwise, it returns concrete counterexamples. \hideaki{A fixed-point iteration over local certificates formally characterizes how they compose toward global correctness, analogous to composing layerwise bound propagation in neural network verification~\cite{alphabetacrown} albeit at a greater computational cost. We leave the implementation of this iteration as future work focusing on local verification in this paper.}

\smallskip
\noindent\textbf{Evaluation.}
We implement \sys{} in Rust and evaluate it on \nzkvmreal{} real-world zkVM backends: Pico, SP1, Sphinx, Valida, and Ziren. \sys{} discovers \nbugsrealworld{} previously unknown zero-day vulnerabilities, of which \nconfirmed{} were confirmed by developers and \nfixed{} already fixed, demonstrating that the approach scales to production constraint systems and surfaces bugs that fuzzing and formal methods have missed. Our code is publicly available at
https://github.com/Koukyosyumei/ZEBRA. \ifarxiv\else The appendices referenced throughout this paper are also available in
the extended version on arXiv.\footnote{\url{ARXIV_URL}}
\fi

\smallskip
\noindent In summary, this paper makes the following contributions:
\begin{itemize}[leftmargin=1.5em,itemsep=2pt]
  \item A \textbf{canonicalization-based correctness model} that reduces zkVM verification to solution-set cardinality over a canonical table space. We show that real-world zkVMs admit compact canonicalizers with formally verified correctness properties in Lean, ensuring that solution-set cardinality reflects the number of distinct executions rather than representational artifacts.
  
  \item \textbf{Sound interval propagation} over polynomial constraints in a modular field, enabling over-approximate reasoning about feasible executions with formal soundness guarantees.
  
  \item A \textbf{parallel branch-and-bound algorithm} for cardinality classification that either uncovers vulnerabilities or certifies local correctness by proving that the solution set is empty or singleton within a bounded input region, rather than merely testing for the existence of satisfying traces.
  
  \item \textbf{Empirical evaluation} identifying \nbugsrealworld{} zero-day vulnerabilities in widely deployed zkVM projects.
\end{itemize}

\section{Background}

This section reviews the foundations of Zero-Knowledge Proof (\S~\ref{subsec:zkp}) and Zero-Knowledge Virtual Machine (\S~\ref{subsec:zkvm}).

\subsection{Zero Knowledge Proof (ZKP)}
\label{subsec:zkp}

\begin{figure*}[!th]
    \centering
    \includegraphics[width=1.0\linewidth]{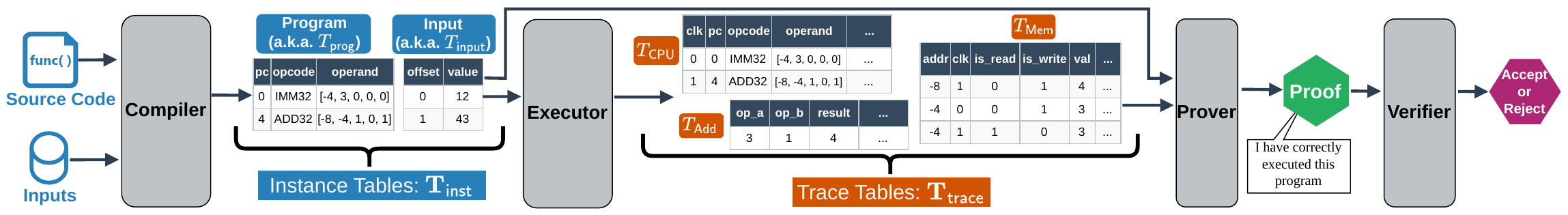}
    \caption{Typical workflow of a zkVM. The compiler converts source code into instance tables ($\mathbf{T}_{\mathrm{inst}}$), which the executor expands into trace tables ($\mathbf{T}_{\mathrm{trace}}$) representing the full execution history. If $\mathbf{T}_{\mathrm{trace}}$ satisfies all algebraic constraints with respect to $\mathbf{T}_{\mathrm{inst}}$, the prover can generate a final proof to convince the verifier that the target program was executed correctly.}
    \label{fig:zkvm-overview}
\end{figure*}

%shorten the first paragraph.

A \textit{Zero-Knowledge Proof (ZKP)} is a protocol where a prover convinces a verifier that a statement $R(x,w)$ is true without revealing anything beyond its validity~\cite{lavin2024survey}. It satisfies completeness, soundness, and zero knowledge: completeness means true statements are accepted, soundness means false statements are rejected except with negligible probability, and zero knowledge means the proof reveals nothing about $w$.

\smallskip
\noindent \textbf{ZK Programming.} A ZKP system requires users to represent the target statement as polynomial constraints $C$, which is complex and error-prone, as high-level logic does not map directly to polynomial constraints. ZK DSLs such as Circom~\cite{belles2022circom} and Noir~\cite{aztec2024noir} address this by letting developers specify both witness computation P and constraints C, which are then compiled into executable code and constraint systems. Despite this, maintaining consistency between computation and constraints is difficult due to complex DSLs and finite-field arithmetic, leading to bugs that cause under-constrained or over-constrained circuits and resulting vulnerabilities.

\subsection{Zero Knowledge Virtual Machine (zkVM)}
\label{subsec:zkvm}

While traditional ZK programming systems require developers to learn new DSLs and define arithmetic circuits for specific tasks, a \textit{Zero-Knowledge Virtual Machine} (zkVM) provides a more general-purpose abstraction by enabling the proof of execution for any program written in a standard high-level language (e.g., Rust, C++, or Go). A zkVM typically adopts a modular architecture consisting of a guest program, a prover, and a verifier (see Fig.~\ref{fig:zkvm-overview}). After the execution of a guest program, a prover of a zkVM produces a cryptographic proof attesting that the guest program was correctly executed, and the verifier can check its correctness succinctly.

\smallskip
\noindent \textbf{Trace Tables.} A zkVM targets a specific ISA (e.g., RISC-V or custom ones like Cairo~\cite{goldberg2021cairo} and Valida~\cite{thomas2025valida}) and executes compiled machine code to produce execution traces of state transitions. These traces are encoded as finite-field matrices called \textit{trace tables}, where each row represents an execution event, and each column represents a machine state component, such as the clock. Static data like binaries and public inputs are also treated as read-only \textit{instance tables}.

\smallskip
\noindent \textbf{Arithmetization of the VM.} A zkVM converts execution logic into polynomial constraints over trace tables to prove correctness without revealing traces. Modern zkVMs use two types of constraints: \textit{Algebraic Intermediate Representation} (AIR) constraints~\cite{ben2018scalable,hassanzadeh2025constraint} enforce local transitions between rows, and \textit{lookup constraints} ensure global consistency across tables by matching entries between them~\cite{habock2022multivariate, thomas2025valida, openvm2025whitepaper}. The prover generates a proof that all constraints hold, which the verifier can succinctly check.

\smallskip
\noindent \textbf{Example of Arithmetization.}  Consider a simplified execution trace for a program consisting of three instructions: \texttt{STORE -4 3}, which writes the immediate value 3 to memory address -4; \texttt{STORE -8 1}, which writes the value 1 to address -8; and \texttt{ADD -12 -4 -8}, which reads from addresses -4 and -8, then writes the sum 4 to address -12. The table column design is based on Valida~\cite{thomas2025valida}, whereas many columns are omitted. %(see Fig.~\ref{fig:examples-constraints} in Appendix~\ref{appendix:sp}).

%Fig.~\ref{fig:examples-constraints} illustrates

\textit{1. Program Table ($T_{\mathsf{prog}}$)} is a read-only instance table that defines the compiled machine code. Each row specifies the instruction (\texttt{opcode}) and its associated destination (\texttt{dst}) and source (\texttt{src1}, \texttt{src2}) operands. A lookup constraint between the CPU Table $T_{\mathsf{CPU}}$ ensures that every instruction executed in ($T_{\mathsf{CPU}}$), defined by the tuple (\texttt{pc}, \texttt{opcode}, \texttt{dst}, \texttt{src1}, \texttt{src2}), actually exists within $T_{\mathsf{prog}}$.

\textit{2. CPU Table ($T_{\mathsf{CPU}}$)} serves as the main dynamic execution trace, where each row captures the current program counter (\texttt{pc}), the instruction being executed, and the values (\texttt{val1}, \texttt{val2}, \texttt{res}) processed during that cycle. Local AIR constraints ensure basic CPU logic, such as the validity of control flow and clock transitions, requiring that $\texttt{next.pc} = \texttt{curr.pc} + 4$ and $\texttt{next.clk} = \texttt{curr.clk} + 1$.

\textit{3. ADD Table ($T_{\mathsf{Add}}$)} validates arithmetic correctness. When the CPU executes an ADD instruction, a lookup constraint ensures that the tuple (\texttt{val1}, \texttt{val2}, \texttt{res}) exists in $T_{Add}$ as (\texttt{a}, \texttt{b}, \texttt{c}), where the AIR constraint $\texttt{a} + \texttt{b} = \texttt{c}$ holds.

%\smallskip
%\noindent \textbf{Comparison with DSL-based Approach.} Because the VM logic is static, the constraints remain the same regardless of the guest program being executed; only the execution trace changes based on the program's logic and inputs. This design brings a critical advantage to zkVMs. Developers can leverage existing libraries written in popular high-level languages and standard toolchains without needing to learn new DSLs, significantly reducing the complexity of manual circuit construction.

%a recently discovered under-constrained bug in RISC Zero, one of the most widely adopted RISC-V–based zero-knowledge virtual machines (zkVMs), was deemed severe enough to warrant a \$50K bounty payout for a single issue~\cite{RiscZero2025SecurityDisclosure}.

\section{Threat Model \& Motivating Example} 
{\bf Threat Model.} This work focuses on soundness and completeness violations in zkVMs arising from incorrect or incomplete constraint specifications.
Our adversary exploits flaws in the constraint system to either
(i) produce invalid execution traces that nevertheless satisfy the constraints (under-constrained behavior), or
(ii) cause valid executions \hideaki{generated by the VM executors} to be rejected by the constraints
(over-constrained behavior).

We explicitly exclude infrastructure-level issues, including bugs in
the compiler, executor, prover, or verifier implementations, as well
as attacks on the cryptographic protocol itself. We also exclude
zero-knowledge leakage, as confidentiality guarantees are
application-specific and orthogonal to the correctness of the constraint satisfaction~\cite{takahashi2025zkfuzz}.

\begin{figure}[!th]
    \centering
    \includegraphics[width=1.0\linewidth]{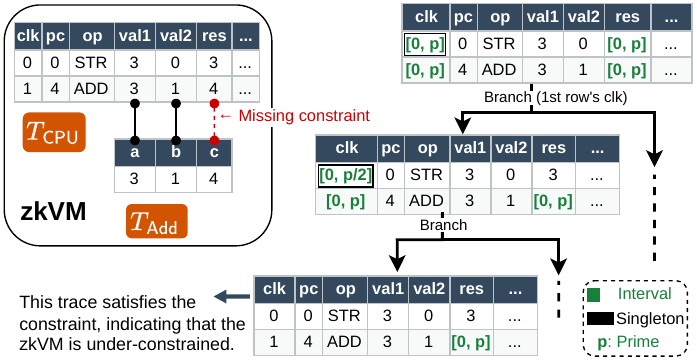}
    \caption{Motivating Example. Missing constraint allows multiple satisfying tables, leading to under-constrained bugs that \sys can identify via branch-and-bound. Green values indicate intervals.}
    \label{fig:motivating-example}
\end{figure}

\noindent
{\bf Motivating Example.} Before presenting the technical details, we illustrate a simplified yet representative bug to build intuition for how our method identifies constraint bugs. 

Consider a simple program with two instructions:
\texttt{STORE} and \texttt{ADD}, where the inputs to \texttt{ADD} are concretely given: $3$ and $1$
(see Fig.~\ref{fig:motivating-example}). In this case, the constraint linking the ADD result to the CPU state contains a
zkVM constraint bug: while the system enforces that
$(\texttt{val}_1, \texttt{val}_2, \texttt{res})$ appears in the ADD table,
it does not enforce that $\texttt{res}$ is uniquely determined by
$\texttt{val}_1$ and $\texttt{val}_2$. As a result, in addition to the intended execution trace
$\{\texttt{val}_1: 3, \texttt{val}_2: 1, \texttt{res}: 4\}$,
a second trace
$\{\texttt{val}_1: 3, \texttt{val}_2: 1, \texttt{res}: 7\}$
also satisfies the constraints, indicating an under-constrained violation.

\sys reasons about sets of executions by lifting table cells to intervals. For the ADD step, we initialize $\texttt{res} \in [0, p-1]$. In a correct system, constraint propagation enforces
$\texttt{res} = \texttt{val}_1 + \texttt{val}_2 = 4$,
collapsing the interval to the singleton $[4,4]$.
In the buggy system, there are multiple values (e.g., $4$ and $7$) satisfy the constraints,
enabling \sys to detect the under-constrained bug by producing
multiple concrete traces satisfying the constraints.

While this motivating example assumes that the program and input are concrete values, \S~\ref{subsec:rv} shows \sys can handle the case when they are given as range intervals.

%interval contains contains an under-constrained 

%\sys focuses on soundness and completeness vulnerabilities due to wrongly implemented constraints. The following types of vulnerabilities are out of scope: (1) vulnerabilities in the underlying compilation, proving, and verification systems (e.g., the Frozen Heart vulnerability caused by an insecure implementation of the Fiat-Shamir transformation~\cite{tang2024zero}); (2) bugs solely in the VM emulator logic, such as the discrepancied between the implementaton of VM emulator and the specification of target ISA; and (3) bugs solely in ZK protocol design that leak information and violate the zero-knowledge property~\cite{chaliasos2024sok}, inline with all prior detectors~\cite{circomspect,pailoor2023automated,jiang2025conscs,chen2024ac4,wen2024practical}---what constitutes a secret is highly application-dependent: e.g., in Validity Rollups~\cite{chen2022review}, ZK proofs are used to accelerate transactions rather than to conceal them.

% --------------
% background is too long. should be shorter than one page.

% Figure's 2 can be simplified. The caption might be longer

%Fig.~1 : make characters bigger

\section{Definitions of zkVM Vulnerabilities}

This section introduces a formal model of constraint systems and bugs in zkVMs. Our model abstracts away prover and cryptographic details, focusing instead on the semantic relationship between instance tables, trace tables, and constraints. This abstraction enables a precise classification of \textit{under-constrained} (soundness) and \textit{over-constrained} (completeness) bugs in the constraint systems of zkVM.

\subsection{Instance and Trace Tables}

Let $\mathbb{F}$ be a finite field used by the zkVM to represent scalar values. For a positive integer $w$, we write $(\mathbb{F}^w)^*$ to denote the set of matrices over $\mathbb{F}$ with $w$ columns and an arbitrary number of rows.

%%%REWRITE
We distinguish between {\it instance}, which encodes execution-invariant data such as the program binary and public inputs, and {\it trace tables}, which record execution-dependent data like the CPU table. For the purposes of this formalization, it is assumed that the given program binary represents a deterministic program. 

Let  $\mathbf{w}_{\mathrm{inst}} = (w_{s_1}, \dots, w_{s_k})$ and $\mathbf{w}_{\mathrm{trace}} = (w_{d_1}, \dots, w_{d_m})$ be the column widths of $k$ instance tables and $m$ trace tables, respectively. The spaces of instance and trace tables are then defined as:
\begin{align}
    \mathbb{S} &= \prod_{i=1}^{k} (\mathbb{F}^{w_{s_i}})^*, &
    \mathbb{D} &= \prod_{j=1}^{m} (\mathbb{F}^{w_{d_j}})^*.
\end{align}

\subsection{Constraint System}

A zkVM proves correctness of execution by enforcing algebraic constraints over trace tables, formally defined as follows.

\begin{definition}[AIR-Lookup-based zkVM Constraints]
Formally, a constraint system in a zkVM is a function $\mathbf{C}: \mathbb{S} \times \mathbb{D} \to \{\mathrm{True}, \mathrm{False}\}$, which is defined as
\begin{equation}
\begin{aligned}
&\mathbf{C}(\mathbf{T}_{\mathrm{inst}}, \mathbf{T}_{\mathrm{trace}}) = \\
&\displaystyle
\bigwedge_{A \in \mathcal{A}} A(\mathbf{T}_{\mathrm{trace}})
\;\land\;
\bigwedge_{L \in \mathcal{L}} L(\mathbf{T}_{\mathrm{inst}}, \mathbf{T}_{\mathrm{trace}}) \;\land\; PV(\mathbf{T}_{\mathrm{inst}}, \mathbf{T}_{\mathrm{trace}})
\end{aligned}
\end{equation}
Here, $\mathcal{A}$ denotes a set of {AIR constraints}, which enforce local transition rules between consecutive rows of trace tables, $\mathcal{L}$ denotes a set of {lookup constraints} enforcing global consistency across tables, and $PV$ is a {public value constraint} that directly inspects the public tables. 
\end{definition}

Note that while AIR and lookup constraints are used to generate cryptographic proofs, and therefore must satisfy certain algebraic formats, the public value constraint is an auxiliary function that directly checks properties of public tables and can be arbitrary.

\subsection{Bugs in zkVMs}
\label{subsec:bugs-in-zkvms}

To define bugs precisely, we reason about \emph{canonical} trace tables, which uniquely represent execution traces. This avoids multiplicities caused by padding, permutation, or other representational redundancies in trace tables. For example, a zkVM might pad a trace table with empty rows to match a power-of-two height. Both the unpadded and padded versions might satisfy constraints, but they represent the same execution trace. Our canonicalization ensures these are treated as the same trace table. The detailed modeling and construction of canonical trace tables is deferred to \S~\ref{appendix-canoincal}.

Let $\hat{\mathbb{D}} \subseteq \mathbb{D}$ denote the set of canonical trace tables, and we define a canonical solution set, which is the set of all canonical trace tables that satisfy the constraints for the fixed instance tables. 

\begin{definition}[Canonical Solution Set]
For a given instance table $\mathbf{T}_{\mathrm{inst}} \in \mathbb{S}$, we define its \textit{canonical solution set} as
\begin{align}
    \mathcal{Q}(\mathbf{T}_{\mathrm{inst}}) = 
    \{ \mathbf{T}_{\mathrm{trace}} \in \hat{\mathbb{D}} \mid \mathbf{C}(\mathbf{T}_{\mathrm{inst}}, \mathbf{T}_{\mathrm{trace}}) = \mathrm{True} \}.
\end{align}
\end{definition}

We first define the local completeness and soundness properties of the constraints in zkVMs, which are desired properties of the constraints of the zkVM for the fixed instance tables.

\begin{definition}[Local Completeness]
We say the constraint $\mathbf{C}$ in a zkVM is locally complete for the instance tables $\mathbf{T}_{\mathrm{inst}}$ if \begin{equation}
  \exists{\,\mathbf{T}_{\mathrm{trace}} \in {\mathbb{D}}}.\;\; \mathbf{C}(\mathbf{T}_{\mathrm{inst}}, \mathbf{T}_{\mathrm{trace}}) = \mathrm{True}
\end{equation}
\end{definition}

\begin{definition}[Local Soundness]
We say the constraint $\mathbf{C}$ in a zkVM is locally sound for the instance tables $\mathbf{T}_{\mathrm{inst}}$ if \begin{equation}
|Q(\mathbf{T}_{\mathrm{inst}})| \leq 1
\end{equation}
\end{definition}

Based on the above desired properties, we classify zkVM bugs according to the cardinality of the canonical solution set $\mathcal{Q}(\mathbf{T}_{\mathrm{inst}})$.

\begin{definition}[Over-Constrained zkVM]
We say the constraint $\mathbf{C}$ is locally \textit{over-constrained} for $\mathbf{T}_{\mathrm{inst}}$ if $
    |\mathcal{Q}(\mathbf{T}_{\mathrm{inst}})| = 0$.
\end{definition} \noindent \hideaki{Under our threat model, the execution trace generated by the VM executor for the program and input encoded in $T_{\mathrm{inst}}$ is a valid execution. Thus, $|Q(T_{\mathrm{inst}})| = 0$ means that
the constraint system rejects this valid execution trace, and is thus over-constrained for $T_{\mathrm{inst}}$.}

\begin{definition}[Under-Constrained zkVM]
We say the constraint $\mathbf{C}$ is locally \textit{under-constrained} for $\mathbf{T}_{\mathrm{inst}}$ if $|\mathcal{Q}(\mathbf{T}_{\mathrm{inst}})| > 1$.
\end{definition} \noindent This means that multiple distinct execution traces satisfy the constraint for the same program and inputs, enabling unsound proofs.

\subsection{Canonical Representation}
\label{appendix-canoincal}

To formally reason about the uniqueness of solutions in $Q(\mathbf{T}_{\mathrm{inst}})$, we must ensure that the trace tables in $\mathbb{D}$ map 1-to-1 with execution traces. In practice, a zkVM might introduce representational redundancies, such as padding rows or permuting non-deterministic lookup tables, which could artificially inflate the cardinality of $Q(\mathbf{T}_{\mathrm{inst}})$ without representing a true logic bug. We resolve this by defining a canonical space $\hat{\mathbb{D}} \subseteq \mathbb{D}$ that filters out such artifacts.

We first introduce the set of semantic execution records $\mathbb{E}$, and the canonicalizer $\mathcal{R}: \mathbb{D} \to \mathbb{E}$ that maps trace tables to a set of execution records. Intuitively, $\mathcal{R}$ replays the trace tables to obtain the set of execution records $\mathbb{E}$, abstracting away the redundancy. Specifically, we view that $\mathcal{R}$ is a union of canonicalizers for each table, i.e., $\mathcal{R}(\mathbf{T}_{\mathrm{trace}}) = \mathcal{R}_{\mathrm{cpu}}({T}_{\mathrm{cpu}}) \cup \mathcal{R}_{\mathrm{add}}({T}_{\mathrm{add}}) \cup \mathcal{R}_{\mathrm{mul}}({T}_{\mathrm{mul}}) \cup \cdots$.

For example, consider an ALU Add Table with columns for inputs ($\texttt{a}, \texttt{b}$), output ($\texttt{res}$), carry bits ($\texttt{c}$), and a selector ($\texttt{s}$). While the raw table $T_{\mathrm{Add}}$ includes all these columns with dummy padding rows (where $\texttt{s}=0$), the canonicalizer $\mathcal{R}$ acts as a semantic projection:
\begin{equation}
    \mathcal{R}_{\mathrm{Add}}(T_{\mathrm{Add}}) = \{ (\texttt{a}_i, \texttt{b}_i, \texttt{res}_i) \mid \texttt{row}_i \in T_{\mathrm{Add}} \land \texttt{s}_i = 1 \}
\end{equation}
By filtering out internal witnesses (e.g., $\texttt{c}$) and padding, and normalizing the permutation, $\mathcal{R}$ maps different raw representations of the same computation to a unique record set.

\begin{definition}[Semantic Equivalence of Trace Tables] 
We define a binary relation $\sim_{\mathcal{R}}$ on $\mathbb{D}$ derived from the canonicalizer $\mathcal{R}$. Two sets of dynamic traces $\mathbf{T}_1, \mathbf{T}_2 \in \mathbb{D}$ are semantically equivalent if and only if they reconstruct to the same execution record:

\begin{equation}
    \mathbf{T}_1 \sim_{\mathcal{R}} \mathbf{T}_2 \iff \mathcal{R}(\mathbf{T}_1) = \mathcal{R}(\mathbf{T}_2)
\end{equation}
\end{definition}

\noindent We then define the canonical trace tables space as the quotient set:

\begin{definition}[Canonical Trace Table Space]
\label{def:canonical-trace-table-space} The space of canonical trace tables, denoted $\hat{\mathbb{D}}$, is defined as the quotient set of $\mathbb{D}$ by the relation $\sim \mathcal{R}$:
\begin{equation}
    \hat{\mathbb{D}} := \mathbb{D} / \sim_{\mathcal{R}} = \{ [\mathbf{T}_{_{\mathrm{trace}}}]\sim \mid \mathbf{T}_{\mathrm{trace}} \in \mathbb{D} \}
\end{equation}, where $[\mathbf{T}_{_{\mathrm{trace}}}]\sim$ represents the equivalence class of traces indistinguishable from $\mathbf{T}_{\mathrm{trace}}$ by $\sim \mathcal{R}$.
\end{definition}

By construction, there exists a natural bijection between the canonical trace space $\hat{\mathbb{D}}$ and the image of the canonicalizer $\mathrm{Im}(\mathcal{R}) \subseteq \mathbb{E}$. This ensures that every set of canonical dynamic traces corresponds to a unique semantic execution trace.
\section{\sys: Design}

The primary challenge in testing and verifying zkVMs lies in the massive, non-linear constraint system $\mathcal{C}$ defined over multiple trace tables. \sys addresses this by shifting from individual-witness sampling to an integer interval lattice: modeling trace cells as intervals allows the branch-and-bound method to formally and efficiently certify whole regions of the search space at once.

\subsection{Localized Verification}
\label{subsec:lv}

Instead of attempting to verify the global correctness of the entire VM at once, we isolate a specific target table $T_{\mathrm{target}} \in \mathbb{D}_i$. In this localized mode, we assume the peripheral tables $\mathbf{T}_{\mathrm{rest}} = \mathbf{T}_{\mathrm{trace}} \setminus \{T_{\mathrm{target}}\}$ are semantically correct and consistent with the static instance tables $T_{\mathrm{inst}}$. This allows us to derive a localized constraint function $\mathcal{C}$ by partially evaluating $\mathbf{C}$ on the fixed witness: $\mathcal{C}(T_{\mathrm{target}}) = \mathbf{C}(\mathbf{T}_{\mathrm{inst}}, \mathbf{T}_{\mathrm{rest}} \cup \{T_{\mathrm{target}}\})$. This modularity allows \sys to bypass the complexity of global consistency proofs while exhaustively verifying the arithmetic correctness of individual units.

\subsection{The Trace Interval Lattice}

We model the search space for $T_{\mathrm{target}}$ as a lattice of integer intervals. This choice is motivated by the observation that complex zkVM constraints can be soundly approximated on intervals, enabling efficient and parallel traversal of the entire search space.

We first introduce the following notations:

\textbf{1) Integer Interval}: Each cell is abstracted as an integer interval, where a set of intervals is denoted as $\mathbb{I} = \{[a, b] \in \mathbb{Z} \times \mathbb{Z} \mid a \leq b\}$. %For intervals $I = [L, R]$ and $I' = [L', R']$, and we denote $I \subseteq I'$ if $L' \in L$ and $R \leq R'$.

\textbf{2) Interval Matrix} Let $\widetilde{T} \in \mathbb{I}^{H \times W}$ denote an interval matrix, where each cell $\widetilde{T}_{i, j}$ is filled with an integer interval $I_{i,j}$. This $\widetilde{T}$ serves as a search state in our \sys. 

Since the integer domain $\mathbb{Z}$ forms a lattice naturally with the partial order given by inequality, the meet by min, and the join by max, the set of all such states $\widetilde{T}$ itself forms a lattice~\cite{fernandez2006interval}, under the partial order of interval inclusion ($\widetilde{T} \sqsubseteq \widetilde{T}'$ if $\forall i,j.\; \widetilde{T}_{i, j} \subseteq \widetilde{T}'_{i, j}$). 

\textbf{3) Interval-to-Field Embedding}: To bridge the gap between the finite field and integer interval, we introduce the interval-to-field embedding of the integer interval $\llbracket \cdot \rrbracket_p : \mathbb{I} \to 2^{\mathbb{F}_p}$ by: $\llbracket I \rrbracket_p = \{x \bmod p \mid x \in \mathbb{Z} \;\, s.t. \;\, a \leq x \leq b\}$. This is also naturally extended to the embedding for a table: $\llbracket \widetilde{T} \rrbracket_p = \{ T \mid \forall i,j.\;\, T_{i, j} \in \llbracket \widetilde{T}_{i,j} \rrbracket_p \}$.

\subsection{Sound Over-approximation of Constraints}
\label{subsec:sound-approximation}

This section derives the sound over-approximation for the three classes of zkVM constraints: AIR, Lookups, and public value check.

\smallskip
\noindent \textbf{Soundness Condition.} For a localized zkVM constraint $\mathcal{C}: \mathbb{F}_p^{H \times W} \to \{\mathsf{True}, \mathsf{False}\}$, its sound over-approximation for an integer interval is formally defined as follows:

\begin{definition}[Sound Over-Approximation]
\label{def-soa}
$\widetilde{\mathcal{C}}: \mathbb{I}^{H \times W} \to \{\mathsf{True}, \mathsf{False}, \mathsf{Uncertain}\}$ is a sound over-approximation of the constraint $\mathcal{C}: \mathbb{F}_p^{H \times W} \to \{\mathsf{True}, \mathsf{False}\}$ if for any $\widetilde{T} \in \mathbb{I}^{H \times W}$, it holds:

\begin{enumerate}
\item $\widetilde{\mathcal{C}}(\widetilde{T}) = \mathsf{True} \implies \forall T \in \llbracket \widetilde{T} \rrbracket_p.\;\, \mathcal{C}(T) = \mathsf{True}$.
\item $\widetilde{\mathcal{C}}(\widetilde{T}) = \mathsf{False} \implies \forall T \in \llbracket \widetilde{T} \rrbracket_p.\;\, \mathcal{C}(T) = \mathsf{False}$.
\item $\widetilde{\mathcal{C}}(\widetilde{T}) = \mathsf{Uncertain}$ whenever neither (1) nor (2) can hold.
\end{enumerate}
\end{definition}

\noindent This ensures that any definitive result ($\mathsf{True}$ or $\mathsf{False}$) returned by $\widetilde{\mathcal{C}}$ is globally valid for all points in the sub-lattice.

\smallskip
\noindent \textbf{AIR Constraints.} AIR $\mathcal{A}$ consists of a set of polynomials $\{ \phi_1,\phi_2, \dots \}$ where each polynomial $\phi: \mathbb{F}^{2W} \to \mathbb{F}$ is defined for two consecutive rows. The satisfaction of $\mathcal{A}$ for a concrete table $T$ is defined as \begin{equation}
   \mathcal{A}(T) := \forall \phi \in \mathcal{A}.\;\,\forall i \in \{1,\dots, H-1\}.\;\, \phi(T_i, T_{i+1}) = 0
\end{equation} \noindent, where $T_i$ denotes the $i$-th row of the table $T$.

We then define the sound over-approximation $\tilde{\phi}$ using the basic interval arithmetic (see Appendix~\ref {appendix:subsec:ia} for the details). The satisfiability of AIR over $\widetilde{T}$ is over-approximated using the zero-satisfiability predicate $SAT_Z([L, H]) \iff \exists k \in \mathbb{Z}.\;\, kp \in [L, H]$:

\begin{align}\widetilde{\mathcal{A}}(\widetilde{T}) := \begin{cases}\mathsf{True} & \text{if } \forall \phi \in \mathcal{A}.\;\, \exists{k \in \mathbb{Z}}.\;\, \tilde{\phi}(\widetilde{T}) = [kp, kp] \\ \mathsf{False} & \text{if } \exists \phi \in \mathcal{A}.\;\, \neg SAT_Z(\tilde{\phi}(\widetilde{T})) \\ \mathsf{Uncertain} & \text{otherwise}\end{cases}\end{align}

\smallskip
\noindent \textbf{Lookup Constraints.} Lookup constraints ensure consistency between the target table $T$ and auxiliary tables. We categorize these into five types: Program, Byte, ALU, Control-Flow, and Memory. Detailed constructions are in Appendix~\ref{appendix:lookup}. For all types, the sound over-approximation $\widetilde{\mathcal{L}}$ returns $\mathsf{True}$ if the interval collapses to a valid singleton, $\mathsf{False}$ if the intervals are entirely disjoint from valid states, and $\mathsf{Uncertain}$ otherwise.

\textit{(L-1) Program}: $\mathcal{L}_{\mathrm{prog}}$ ensures every executed instruction (PC, opcode, operands) exists in the static program table $T_{\mathrm{prog}}$.

\textit{(L-2) Byte}: $\mathcal{L}_{\mathrm{byte}}$ enforces 8-bit range $[0, 255]$ and bitwise consistency (e.g., $\texttt{AND}, \texttt{XOR}$). $\widetilde{\mathcal{L}}_{\mathrm{byte}}$ uses bit-masking to track fixed/varying bits for tight interval bounds.

\textit{(L-3) ALU}: $\mathcal{L}_{\mathrm{alu}}$ validates word-level arithmetic $\diamond \in \{+, -, \times, \div, \dots\}$. $\widetilde{\mathcal{L}}_{\mathrm{alu}}$ leverages opcode-specific interval approximations $\tilde{\diamond}$ to detect arithmetic soundness violations.

\textit{(L-4) Control-Flow}: $\mathcal{L}_{\mathrm{cf}}$ verifies the transitions of program counters $\text{pc}' = \mathrm{cf}_{\circ}(\text{pc}, a, b, c)$. $\widetilde{\mathcal{L}}_{\mathrm{cf}}$ lifts the transition function to its interval-based counterpart $\widetilde{\mathrm{cf}_{\circ}}$.

\textit{(L-5) Memory}: $\mathcal{L}_{\mathrm{mem}}$ enforces global Read-After-Write consistency and zero-value initialization. $\widetilde{\mathcal{L}}_{\mathrm{mem}}$ models a memory by dynamically splitting address segments to track value intervals across the execution timeline.

\smallskip
\noindent \textbf{Public Value Check.} Public value constraint $PV$ inspects the public parts of tables, such as the final execution state, including the final program counter and exit code. Most of these constraints are simple equalities ($=$) or non-equalities ($\neq$), and they can be evaluated over the integer interval domain in the same way as AIR.

\subsection{Lattice Branch-and-Bound Search}
\label{subsec:bps}

Algo.~\ref{alg:bnb_search} systematically explores the lattice from the initial state $\widetilde{T}_{\text{root}}$ toward concrete traces. One of the natural choices of initial interval is $[0, p - 1]$, where $p$ is the prime order of the finite field $\mathbb{F}_p$.

\smallskip
\noindent \textbf{Basic Workflow.} The algorithm maintains a queue of search states. In each iteration, it \emph{\textit{1)} $\mathsf{SelectCellToSplit}(\widetilde{T})$} selects a state $\widetilde{T}$ and a "divisible" cell $\widetilde{T}_{i,j}$, where an interval $[a, b]$ is considered divisible if $a \neq b$; \emph{\textit{2)} $\mathsf{SplitInterval}(\widetilde{T}_{i,j})$} branches by splitting the interval $\widetilde{T}_{i,j}$ into $k$ smaller intervals $\{I_1, \dots, I_k\}$, and ;\emph{\textit{3)} $\widetilde{\mathcal{C}}$} evaluates the new states. If $\widetilde{\mathcal{C}}(\widetilde{T}) = \mathsf{False}$, the entire sub-lattice is discarded, efficiently pruning the search space. When $\widetilde{\mathcal{C}}(\widetilde{T}) = \mathsf{True}$, $\widetilde{T}$ is concretized into a set of concrete tables. Each table $T_i$ is added to the result set $Q$ only if its canonical representation $\mathcal{R}(T_i)$ has not been seen before.

\smallskip
\noindent \textbf{Bug Condition.} Any discrepancy in the cardinality of $Q(\widetilde{T})$ signals a critical vulnerability: an under-constrained bug is reported if $|Q(\widetilde{T})| > 1$, while an over-constrained bug is flagged if $|Q(\widetilde{T})|= 0$.

\smallskip
\noindent \textbf{Framework Guarantee.} The reliability of \sys rests on the convergence and completeness properties of the branch-and-bound framework applied to the interval lattice of the target table. By employing a sound approximation $\widetilde{\mathcal{C}}$, \sys ensures that any region of the search space pruned as False is mathematically guaranteed to contain no valid execution trace for the target component.

\begin{theorem}[Soundness and Completeness of Local Verification]\label{thm:sc}
Let $\mathcal{C}$ be a localized zkVM constraint system restricted to a target table $T_{\mathrm{target}}$, given fixed instance tables $\mathbf{T}_{\mathrm{inst}}$ and assumed valid peripheral trace tables $\mathbf{T}_{\mathrm{rest}}$. Then, Algo.~\ref{alg:bnb_search} terminates, and the resulting canonical solution set $Q$ satisfies:
\begin{itemize}
    \item \textbf{Local Soundness}: Every element $T \in Q$ is a
    concrete trace that satisfies the localized constraints:
    $\mathcal{C}(T) = \mathsf{True}$.
    \item \textbf{Local Completeness}: For any concrete $T$ with
    $\mathcal{C}(T) = \mathsf{True}$, there exists $T' \in Q$ such
    that $\mathcal{R}(T') = \mathcal{R}(T)$.
\end{itemize}
\end{theorem}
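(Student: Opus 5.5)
The proof will establish three claims about Algo.~\ref{alg:bnb_search}: it terminates, it is locally sound, and it is locally complete. The common tool is an invariant on the work queue: at every iteration, the states in the queue, together with the sub-lattices already pruned as $\mathsf{False}$ and those already concretized because $\widetilde{\mathcal{C}}$ returned $\mathsf{True}$, cover $\llbracket \widetilde{T}_{\mathrm{root}} \rrbracket_p$. The root $\widetilde{T}_{\mathrm{root}}$ has every cell equal to $[0,p-1]$, so its embedding is all of $\mathbb{F}_p^{H\times W}$.

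\textbf{Termination.} Use the measure $\mu(\widetilde{T}) = \sum_{i,j} (b_{i,j}-a_{i,j})$, the sum of the widths of the cells. It is a nonnegative integer bounded by $HW(p-1)$ at the root. $\mathsf{SplitInterval}$ replaces one divisible cell $[a,b]$, with $a<b$, by finitely many strictly smaller subintervals. Hence every child has strictly smaller $\mu$ than its parent, and each node has finitely many children. The search tree is therefore finitely branching with bounded depth, so by K\"onig's lemma it is finite. One point needs care: the algorithm must also dispose of fully concrete states. If every cell is a singleton, $\llbracket \widetilde{T} \rrbracket_p$ is a single table $T$, and the plan is to argue, or assume of the interval evaluators, that $\widetilde{\mathcal{C}}$ is exact on singletons. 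This holds because AIR interval arithmetic on point intervals yields a point $[v,v]$, and the predicate then decides whether $v \equiv 0 \pmod p$. The lookups and $PV$ likewise return a definite answer on singletons by their construction. So no leaf stays $\mathsf{Uncertain}$, and the loop ends once the finite tree is exhausted.

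\textbf{Soundness.} A table enters $Q$ only by concretizing a state $\widetilde{T}$ with $\widetilde{\mathcal{C}}(\widetilde{T})=\mathsf{True}$, producing some $T \in \llbracket \widetilde{T} \rrbracket_p$. Clause (1) of Definition~\ref{def-soa} then gives $\mathcal{C}(T)=\mathsf{True}$ directly. It remains to check that $\widetilde{\mathcal{C}}$, built as the conjunction of $\widetilde{\mathcal{A}}$, the $\widetilde{\mathcal{L}}$'s and $\widetilde{PV}$, is a sound over-approximation. The conjunction should return $\mathsf{True}$ only if all components do, and $\mathsf{False}$ if any component does. I will take component soundness as given from \S\ref{subsec:sound-approximation} and the appendices, namely the inclusion property of interval arithmetic modulo $p$ for $\tilde\phi$.

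\textbf{Completeness.} This is the main step. Take any $T$ with $\mathcal{C}(T)=\mathsf{True}$. Since $T \in \llbracket \widetilde{T}_{\mathrm{root}} \rrbracket_p$, the covering invariant says some node of the tree contains $T$. The invariant holds initially and is preserved because splitting satisfies $\llbracket [a,b] \rrbracket_p = \bigcup_k \llbracket I_k \rrbracket_p$, which needs the split to be a partition of $[a,b]$. Follow the path from the root that keeps $T$ inside the current state. By termination this path ends either at a node pruned as $\mathsf{False}$ or at a node concretized as $\mathsf{True}$.
\begin{itemize}
\item A $\mathsf{False}$ node is impossible: clause (2) of Definition~\ref{def-soa} would force $\mathcal{C}(T)=\mathsf{False}$.
\item At a $\mathsf{True}$ node, concretization enumerates $\llbracket \widetilde{T} \rrbracket_p$ and inserts $T$ unless some $T'$ with $\mathcal{R}(T')=\mathcal{R}(T)$ is already in $Q$.
\end{itemize}
Either way $Q$ contains the required representative.

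The main obstacle is making sure concretization is exhaustive, up to $\sim_{\mathcal{R}}$, rather than merely sampling. I would make this explicit as a precondition of the algorithm. Alternatively, I would restrict completeness to the bounded regions on which concretization enumerates every table, and note that deduplication by $\mathcal{R}$ only removes $\sim_{\mathcal{R}}$-equivalent tables.
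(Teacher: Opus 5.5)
Your proposal is correct and follows the paper's own three-part argument. Termination holds because splitting strictly shrinks intervals in a finite discrete lattice, soundness follows directly from clause~(1) of Definition~\ref{def-soa}, and completeness follows from the total-cover property of $\mathsf{SplitInterval}$ together with pruning only on $\mathsf{False}$ and deduplication by $\mathcal{R}$. Your measure-and-path argument makes this rigorous. It also surfaces two assumptions the paper's sketch leaves implicit: that $\widetilde{\mathcal{C}}$ never returns $\mathsf{Uncertain}$ on an all-singleton state, and that $\mathsf{Concretize}$ enumerates $\llbracket \widetilde{T} \rrbracket_p$ exhaustively.

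Keep the first one as an explicit assumption rather than claiming it holds ``by construction'' for the lookups. Some appendix transformers, as written, are not exact on point intervals. For example, the \texttt{BitAnd} bound gives $[a \,\&\, b,\ \min(a,b)]$, and \texttt{Div} returns the full word when the divisor is $[0,0]$. In those cases the algorithm would reach a non-divisible $\mathsf{Uncertain}$ state.
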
 \noindent See Appendix~\ref{subsec:proof} for the proof sketch.

\begin{algorithm}[!th]
\caption{Branch-and-Bound Lattice Search for zkVM}
\label{alg:bnb_search}
\begin{algorithmic}[1]
\Require Localized constraints $\mathcal{C}$ with respect to the given instance tables $\mathbf{T}_{\mathrm{inst}}$ and the target table $T_{\mathrm{target}}$
%\Ensure A set of canonical trace tabless $\mathcal{Q} \subset \widetilde{\mathbb{D}}$ satisfying $\mathcal{C}$

\State $\widetilde{T}_{\mathrm{root}} \gets$ Initialize all cells in $T_{\mathrm{target}}$ with interval $[0, p-1]$
\State $\mathsf{Queue}.\text{push}(\widetilde{T}_{\mathrm{root}})$, $\quad$ $\mathcal{Q} \gets \emptyset$ \Comment{Initialization}

\While{$\mathsf{Queue}$ is not empty}
    \State $\widetilde{T} \gets \mathsf{Queue}.\text{pop}()$
    \State $\mathsf{flag} \gets \widetilde{\mathcal{C}}(\widetilde{T})$ %\Comment{// Evaluate constraints using Interval Arithmetic}

    \If{$\mathsf{flag} = \mathsf{False}$} {continue} \Comment{Pruned}
    \ElsIf{$\mathsf{flag} = \mathsf{True}$} \Comment{Certain}
        \State $\{T_1, T_2, ..\} \gets \mathsf{Concretize}(\widetilde{T})$
        %\State $\mathcal{Q} \gets \mathcal{Q} \cup \{T_1, T_2, \dots\}$
        \State $\mathcal{Q} \gets \mathcal{Q} \cup \{\, T_i \mid T_i \in \{T_1, T_2, ..\}, \forall q \in \mathcal{Q},\; \mathcal{R}(q) \neq \mathcal{R}(T_i)\}$
    \Else \Comment{Uncertain}
        \State $(i,j) \gets \mathsf{SelectCellToSplit}(\widetilde{T})$ %\Comment{e.g., widest interval}
        \State $(\widetilde{T}_{left}, \widetilde{T}_{right}) \gets \mathsf{SplitInterval}(\widetilde{T}_{i,j})$
        %\State $p_{left} \gets \rho(S_{left}), \quad p_{right} \gets \rho(T_{right})$
        \State $\mathsf{Queue}.\text{push}(\widetilde{T}_{left})$, $\quad$ $\mathsf{Queue}.\text{push}(\widetilde{T}_{right})$
    \EndIf
\EndWhile \If{$|\mathcal{Q}| > 1$} Found Under-Constrained Bug \EndIf
\If{$|\mathcal{Q}| = 0$} Found Over-Constrained Bug \EndIf
%\Return $\mathcal{Q}$
\end{algorithmic}
\end{algorithm}

\subsection{Input Range Verification}
\label{subsec:rv}

We apply \rangeverification to instruction tables, such as ALU and Control-Flow tables, whose columns cleanly partition into inputs, output, and other auxiliary cells. Under a sound and complete constraint system, fixing the inputs determines a unique canonical row: local soundness forces at most one satisfying assignment to the output, and local completeness forces at least one.

To verify a range, we initialize the inputs with a broad interval $[L, R]$ rather than a singleton value. Because our branch-and-bound algorithm maintains a total cover of the search space, it systematically partitions this range until every valid execution trace is found and canonicalized. Let $\mathcal{I}_{in}$ be the set of inputs initialized with intervals, and let $V = \prod_{I \in \mathcal{I}_{in}} |I|$ be the total number of discrete input combinations within that range. If the constraint system is perfectly sound and complete, each unique input combination should correspond to exactly one semantically distinct execution trace. \sys evaluates the correctness of the range by comparing the cardinality of the resulting canonical solution set $|Q|$ with the expected volume $V$: sound and complete if $|Q| = V$; under-constrained if $|Q| > V$; over-constrained if $|Q| < V$.
\section{\sys: Implementation}

We implement \sys as a high-performance localized verification engine in approximately 4,500 lines of Rust. 

\subsection{Automated Constraint Extraction}
\label{subsec:ace}

\sys automates the extraction of arithmetic constraints from source code of Plonky3-based zkVMs using a dual-path architecture: a macro-based engine for AIR and static semantic lifting for lookups. The framework is highly extensible; adapting \sys to a new zkVM requires around 500 lines of tailored code.

\smallskip
\noindent \textbf{AIR Extraction via Rust Macros.} To handle diverse zkVM forks, \sys uses a unified Rust macro interface that intercepts symbolic execution. This automatically captures symbolic nodes (e.g., \texttt{Add}, \texttt{Mul}, \texttt{IsTransition}) directly from AIR definitions, ensuring compatibility across heterogeneous implementations.

\smallskip
\noindent \textbf{Lookup Extraction and Semantic Lifting.} \sys statically analyzes cross-table value transfers, which are implemented as channel-based virtual buses, and reconstructs their functional intent into semantically equivalent symbolic constraints  (Fig.~\ref{fig:channel}).

\smallskip
\noindent \textbf{Public Value and Extensibility.} Public value constraints, typically simple checks in standard Rust, are manually rewritten in our symbolic representation. 

\begin{figure}[!ht]
    \centering
    \includegraphics[width=0.9\linewidth]{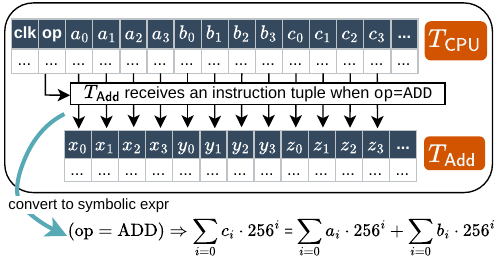}
    \caption{The CPU table offloads complex operation constraints to auxiliary tables such as the ADD table, by connecting their columns via a channel, and \sys converts them into semantically equivalent constraints.}
    \label{fig:channel}
\end{figure}

\subsection{Canonicalization}
\label{subsec:cano}

A significant challenge in zkVM verification is representational redundancy, where multiple distinct matrices may represent the same execution trace due to implementation-level artifacts. For instance, a zkVM might pad a trace table with empty rows to satisfy height requirements or utilize row permutations. Without normalizing these variations, a solver would report an inflated cardinality for $Q$, leading to false-positive under-constrained bug reports.

To address this, \sys utilizes a canonicalizer ($\mathcal{R}$) to project raw trace tables into unique sets of semantic execution records. This function acts as a semantic filter, ignoring internal helper columns and padding while normalizing permutations.

For tables representing each ALU and control-flow instruction, the canonicalized representation of a single row is implemented as a tuple of the current program counter, the next program counter, and all operands, which sufficiently captures the input and the output effect of each instruction (see Fig.~\ref{fig:canonicalizer} for the example pseudocode).

For tables that manage memory instructions such as \texttt{LW} and \texttt{SW}, the canonicalized representation of each row is a tuple of columns corresponding to accessed memory values and immediates.

The CPU table canonicalizer focuses on reconstructing the unified timeline of execution. Each row is projected into a tuple comprising the clock cycle, the program counter, and a record of memory writes performed during that cycle.

\smallskip
\noindent \textbf{Formalized correctness in Lean~4.}
We formalize the canonicalizer design in Lean~4 and verify its key correctness properties at the level of the algorithmic specification.
The development assumes that, for each zkVM, the trace-table generator records every execution step as a non-padding row, with no spurious or missing records.
Under this assumption, we prove three properties.

First, \textit{generator-independence}: any two generators satisfying this assumption produce the same canonical form for a given execution, ensuring that representational artifacts such as padding and row permutations are absorbed by $\mathcal{R}$.
Second, a constructive \textit{bijection} between the canonical trace space and the image of the canonicalizer, establishing that canonical representations correspond one-to-one with trace equivalence classes.
Third, \textit{injectivity} with respect to distinct executions, which underpins the cardinality criterion of \S~\ref{subsec:bugs-in-zkvms}: distinct executions map to distinct canonical forms, so $|Q|$ counts distinct executions rather than representational variants.

The framework is instantiated for the concrete column layouts of each evaluated zkVM; see Appendix~\ref{app:canonicalization} for the formal statements. %and examples.

\subsection{Search Optimization} 
\label{subsec:optimization}

The search space for zkVM trace tables is inherently high-dimensional, often involving thousands of cells in a table. To ensure that \sys can traverse this lattice within practical time limits, we implement several systems-level optimizations, including parallel execution, heuristic prioritization, and strategic initialization.

\smallskip
\noindent \textbf{Parallelization.} The search process is implemented as a multi-thread parallelized algorithm. Multiple worker threads concurrently pop candidate tables from a global queue, evaluate ($\tilde{\mathcal{C}}$), and execute recursive partitioning. If a thread determines that a sub-lattice is False, it immediately discards the entire branch and steals the next-highest-priority candidate from the queue.

\begin{figure}[!th]
\centering
\begin{align*}
\inferrule{I_s = [1, 1] \quad s \cdot (v - [\tau,\tau]) = 0}{I_v \leftarrow v \cap [\tau, \tau]} \quad (\textsc{Conditional-Const-Gate)}& \\ 
\frac{I_s = [1, 1] \quad s \cdot (a - b) = 0}{I_a, I_b \leftarrow a \cap b} \quad (\textsc{Conditional-Var-Gate})& 
\\
\inferrule{I_a = [a_{lo}, a_{hi}] \quad I_L = [\ell, h] \quad I_e =[e_{lo}, e_{hi}] \\\\ a = L - d \cdot e \quad d > 0 \quad a,e \notin \text{FV}(L) \\\\ |\ell| + |h| + d(|e_{lo}| + |e_{hi}|) + |a_{lo}| + |a_{hi}| < p}{I_e \leftarrow I_e \cap \left[ \left\lfloor \frac{\ell - a_{hi}}{d} \right\rfloor, \left\lfloor \frac{h - a_{lo}}{d} \right\rfloor \right]} \quad (\textsc{Affine-Backward})&
\end{align*}
\caption{Rules for Backward Refinement. $\text{FV}(L)$ is a set of free variables in an expression $L$.}
\label{fig:backward-refinment}
\end{figure}

\smallskip
\noindent \textbf{Backward Refinement.} To accelerate the convergence of the branch-and-bound search, \sys implements Backward Interval Refinement, which propagates constraints in reverse to shrink the search space, based on common patterns in real-world zkVMs: Conditional Constant-Gate, Conditional Variable-Gate, and Affine Backward Refinement. We denote the interval assigned for a variable $x$ by $I_x$, and the interval evaluated for an expression $L$ by $I_L$. See Fig.~\ref{fig:backward-refinment} for the formal inference rule and Appendix~\ref{appendix:implementation} for proofs.

\textit{(BR-1) Conditional Constant-Gate}: A common pattern in zkVM arithmetization is the use of a selector variable ($s$) to gate a specific constraint. For instance, a constraint might enforce that a register $v$ must equal a target constant $\tau$ only when $s=1$. \sys automatically detects this pattern and shrinks the interval of $v$ to its intersection $v \cap [\tau, \tau]$. If the intersection $v \cap [\tau, \tau]$ is empty, the branch is immediately pruned as unsatisfiable.

\textit{(BR-2) Conditional Variable-Gate}: Another popular pattern is conditional equality used to represent register-to-register moves or alias resolution. \sys replaces the intervals for both $a$ and $b$ with their intersection $a \cap b$, and prunes the state if they are disjoint.

\textit{(BR-3) Affine Backward Refinement}: The third common pattern is the affine relation. In many zkVMs, a value is decomposed into smaller limbs or quotients, and they are modeled as affine relations of the form $a = L - d \cdot e$, where $a$ is a remainder or low-limb variable, $L$ is an affine symbolic expression (independent of $a$ and $e$), $e$ is a quotient variable, and $d > 0$ is a constant stride. Given the current interval bounds for the affine expression $I_L = [\ell, h]$, the limb $I_a = [a_{min}, a_{max}]$, and the variables $a$ and $e$ do not occur free in $L$, we can derive tighter bounds for the quotient $e$.

\smallskip
\noindent \textbf{Certified Circuit Elision.} Many zkVMs use range-check circuits to verify that four-limb field elements fit within a specific range, like the BabyBear or KoalaBear range. Following best practices in formal verification, we simplify these by lifting them into dedicated IR nodes, such as \texttt{BabyBearRange} or \texttt{KoalaBearRange}. This allows the interval evaluator to apply immediate bounds to the variable, significantly reducing the complexity of the symbolic tree and accelerating the pruning process.

\smallskip
\noindent \textbf{Optimization for Detecting Under-Constrained Bugs.} In addition to the general optimizations that help the search terminate as early as possible, we introduce two techniques specifically designed to detect under-constrained bugs, which require finding satisfying assignments as quickly as possible.

\textit{Prioritization}:
Rather than exploring the lattice using a naive depth-first or breadth-first strategy, \sys uses a priority queue to \hideaki{direct the search toward regions that are more likely to contain valid but unintended execution traces. For each search state $\widetilde{T}$, we define its priority score as $\eta(\widetilde{T}) \triangleq \sum_{\tilde{c} \in \widetilde{\mathcal{C}}} \mathbbm{1}(\tilde{c}(\widetilde{T}) = \mathsf{True})$, i.e., the number of constraints that are already definitively satisfied over the corresponding interval region. We call a state \emph{almost valid} when $\eta(\widetilde{T})$ is close to $|\widetilde{\mathcal{C}}|$. Because an under-constrained bug manifests as a second, distinct satisfying trace, regions in which most constraints are already satisfied are heuristically more likely to yield such a trace after further refinement than regions in which fewer constraints are definitively satisfied. This score affects only the order in which states are explored and is never used to prune a region; therefore, it does not affect soundness or the set of states explored when the search runs to completion.}

\textit{Warm-Starting}: \sys can also leverage domain knowledge of the virtual machine. In particular, the search can be warm-started by initializing the lattice near the "honest" execution trace produced by the VM’s executor. This initialization allows the tool to quickly determine whether a second, conflicting witness exists by exploring the neighborhood of a known valid execution.

\section{Evaluation}

We conduct a comprehensive evaluation to answer the following research questions:

\noindent \textbf{RQ1:} How effectively can \sys locally verify real-world zkVMs at a given input point?

\noindent \textbf{RQ2:} How effectively can \sys locally verify real-world zkVMs over a given input range?

\noindent \textbf{RQ3:} How does increasing the number of parallel workers affect \sys's verification throughput?

\noindent \textbf{RQ4:} What is the contribution of each optimization to the verification performance of \sys?

\noindent \textbf{RQ5:} Can \sys uncover previously unknown bugs in real-world zkVM implementations?

\noindent \textbf{RQ6:} How does the bug-detection performance of \sys compare to existing fuzzers?

\smallskip \noindent \textbf{Benchmarks}. We evaluate \sys on five real-world zkVMs implemented using Plonky3: Pico~\cite{pico2025}, SP1~\cite{sp12026}, Sphinx~\cite{sphinx2025}, Valida~\cite{valida2025,valida2025core}, and Ziren~\cite{ziren2025}. These systems were selected based on their popularity, measured by the number of GitHub stars and forks. Pico, SP1, and Sphinx target the RISC-V architecture, Ziren targets MIPS, and Valida targets a custom instruction set architecture (ISA). Our study focuses on verifying and testing the core tables of these zkVMs, including the CPU table, ALU tables, control-flow tables, and memory instruction tables.

\smallskip \noindent \textbf{Configurations}. Experiments were conducted on an Intel Xeon 2.20GHz server with 31GB RAM running Ubuntu 22.04.3 LTS, utilizing four worker threads by default. We distinguish between \emph{instruction tables}, which constrain the input-output effects of specific opcodes (e.g., ALU, control-flow, and memory instruction tables), and \emph{non-instruction tables}, which manage global state and execution flow (e.g., the CPU table and global memory table).

Exhaustive verification analysis (RQ1–3) focused exclusively on instruction tables. We excluded non-instruction tables from this exhaustive scope because their high dimensionality (60–100 columns) and trace-dependent row counts make interval-based search computationally impractical for a full proof.

Instead, for bug detection (RQ5-6), we expanded our scope to include these non-instruction tables by leveraging warm-starting. \hideaki{For each table, we generate up to 30 short programs whose opcodes and operands are instantiated uniformly at random. Thus, for a zkVM with $n$ target tables, we use up to $30n$ honest seed traces in total. We execute each program using the target zkVM's native executor and table generator to obtain one honest seed trace. During the search, \sys randomly selects subsets of target columns and incrementally increases the subset size. The selected columns are lifted to interval domains, while the remaining cells retain their concrete witness values from the seed trace. This warm-starting strategy enables \sys to explore a localized neighborhood of valid executions and identify bugs in complex, large-scale tables.}

For single-point trials, we performed 200 independent runs per (table, opcode) pair using 32-bit random operands and a 100-second timeout. Range verification (RQ2) targeted the standard two-operand ($op\_b, op\_c$) instruction format by injecting a random interval of width $r \in \{1, 2, 8, 32, 128\}$ into one randomly selected limb per input. This configuration yields a search volume proportional to $r^2$ to evaluate solver convergence under uncertainty. \hideaki{Scalability (RQ3) reuses exactly the single-point opcode-wise benchmark instances from RQ1, with the same operands, random seeds, and timeout; only the number of worker threads $W\in\{1,2,3,4\}$ is varied.} We compared \sys against two baselines: the Z3 SMT solver (using QF\_BV logic via SMT-LIB2) and Arguzz, a mutation-based fuzzer we extended to support our five target zkVMs (see Appendix~\ref{appendix:experiemnt-details} for the detailed configurations for those baselines).

% Please add the following required packages to your document preamble:
% \usepackage{booktabs}
% \usepackage{multirow}
\begin{table*}[]
\centering
\caption{Local verification performance at a single input point for \sys and the Z3 SMT solver. $N$ is the number of tables or opcodes. Table-wise, $N_v$ counts fully verified tables and Succ\% is the fraction of verified opcodes averaged over tables; opcode-wise, $N_v$ and Succ\%
denote verified opcodes and $N_v/N$. Time is the average verification time. $N_{c}$ is the number of instances verified by both, and Spd is the average speedup of \sys over Z3 on this common set. 
\sys achieves higher coverage ($+16.5$ pp) and up to $51.5\times$ speedup.}
\begin{tabular}{@{}c|ccccccccc|ccccccccc@{}}
\toprule
\multirow{3}{*}{zkVM} & \multicolumn{9}{c|}{Table-Wise} & \multicolumn{9}{c}{Opcode-Wise} \\ \cmidrule(l){2-19} 
 & \multicolumn{1}{c|}{\multirow{2}{*}{$N$}} & \multicolumn{3}{c|}{\sys} & \multicolumn{3}{c|}{SMT} & \multirow{2}{*}{$N_{c}$} & \multirow{2}{*}{Spd} & \multicolumn{1}{c|}{\multirow{2}{*}{$N$}} & \multicolumn{3}{c|}{\sys} & \multicolumn{3}{c|}{SMT} & \multirow{2}{*}{$N_{c}$} & \multirow{2}{*}{Spd} \\ \cmidrule(lr){3-8} \cmidrule(lr){12-17}
 & \multicolumn{1}{c|}{} & $N_{v}$ &  Succ\% & \multicolumn{1}{c|}{Time} & $N_{v}$ & Succ\% & \multicolumn{1}{c|}{Time} &  &  & \multicolumn{1}{c|}{} & $N_{v}$ & Succ\% & \multicolumn{1}{c|}{Time} & $N_{v}$ & Succ\% & \multicolumn{1}{c|}{Time} &  &  \\ \midrule
Valida & \multicolumn{1}{c|}{7} & 2 & 33.3\% & \multicolumn{1}{c|}{0.010s} & 1 & 16.7\% & \multicolumn{1}{c|}{0.593s} & 0 & {NA}\footnotemark & \multicolumn{1}{c|}{11} & 2 & 18.2\% & \multicolumn{1}{c|}{0.010s} & 2 & 18.2\% & \multicolumn{1}{c|}{0.593s} & 0 & NA \\
Sphinx & \multicolumn{1}{c|}{7} & 4 & 57.1\% & \multicolumn{1}{c|}{0.109s} & 2 & 28.6\% & \multicolumn{1}{c|}{1.429s} & 2 & 4.7x & \multicolumn{1}{c|}{18} & 10 & 55.6\% & \multicolumn{1}{c|}{0.158s} & 7 & 38.9\% & \multicolumn{1}{c|}{1.630s} & 7 & 22.3x \\
Pico & \multicolumn{1}{c|}{8} & 4 & 50.0\% & \multicolumn{1}{c|}{0.095s} & 2 & 25.0\% & \multicolumn{1}{c|}{1.117s} & 2 & 4.0x & \multicolumn{1}{c|}{25} & 10 & 40.0\% & \multicolumn{1}{c|}{0.144s} & 7 & 28.0\% & \multicolumn{1}{c|}{1.274s} & 7 & 27.5x \\
SP1 & \multicolumn{1}{c|}{10} & 6 & 60.0\% & \multicolumn{1}{c|}{0.070s} & 3 & 33.3\% & \multicolumn{1}{c|}{4.621s} & 3 & 82.4x & \multicolumn{1}{c|}{32} & 16 & 50.0\% & \multicolumn{1}{c|}{0.100s} & 10 & 31.2\% & \multicolumn{1}{c|}{2.497s} & 9 & 56.8x \\
Ziren & \multicolumn{1}{c|}{12} & 8 & 66.7\% & \multicolumn{1}{c|}{0.239s} & 2 & 28.2\% & \multicolumn{1}{c|}{4.516s} & 2 & 1.6x & \multicolumn{1}{c|}{41} & 23 & 56.1\% & \multicolumn{1}{c|}{0.246s} & 14 & 34.1\% & \multicolumn{1}{c|}{2.976s} & 8 & 92.2x \\ \midrule
All & \multicolumn{1}{c|}{44} & 24 & 55.8\% & \multicolumn{1}{c|}{0.132s} & 10 & 27.2\% & \multicolumn{1}{c|}{3.339s} & 9 & 29.8x & \multicolumn{1}{c|}{127} & 61 & 48.0\% & \multicolumn{1}{c|}{0.169s} & 40 & 31.5\% & \multicolumn{1}{c|}{2.204s} & 31 & 51.5x \\ \bottomrule
\end{tabular}
\label{tab:rq1}
\end{table*}

\subsection{RQ1: Local Verification Performance}

Table~\ref{tab:rq1} reports local verification performance under two aggregations. {Opcode-wise} treats each instruction opcode as a distinct target (e.g., \texttt{ADD}, \texttt{BEQ});  {table-wise} groups opcodes that share a single trace table  (e.g., all bitwise operations folded into one \texttt{Bitwise} table) into one entry. We use opcode-wise figures as the headline figures and report table-wise figures as a robustness check.

\smallskip \noindent \textbf{Overall Performance.} At the opcode level, \sys verified $48.0\%$ ($61/127$) of instruction targets, against Z3's $31.5\%$ ($40/127$), a $+16.5$\,pp gain. Aggregating by table widens the gap to $+28.6$\,pp ($55.8\%$ vs.\ $27.2\%$), confirming that \sys's advantage is not an artifact of how opcodes are bucketed. On the $31$ operations that both tools certify, \sys is on average $51.5\times$ faster.

\smallskip \noindent \textbf{Per-zkVM Analysis.} \sys outperforms Z3 on every evaluated zkVM. The opcode-wise gap is largest on Ziren ($+22.0$\,pp) and SP1 ($+18.8$\,pp), and remains positive on Sphinx ($+16.7$\,pp) and Pico ($+12.0$\,pp); on Valida the two tools tie at $18.2\%$ opcode-wise but \sys leads $33.3\%$ vs.\ $16.7\%$ table-wise. Valida's lower absolute coverage is consistent with its bit-decomposition-heavy design, in which several tables use up to $4\times$ as many columns as their counterparts in other zkVMs; the resulting blow-up inflates the search space for any solver operating cell-by-cell. The largest table-wise speedup, $82.4\times$ on SP1, is driven by its jump table: \sys certifies \texttt{JAL} and \texttt{JALR} in $\approx5$\,ms each, whereas Z3 spends $\approx1.3$\,s.

\smallskip \noindent \textbf{Failure Analysis.} Successful verifications included simple ALU tables, multiplication, shifts, and jumps. However, deep non-linearities, such as division and multi-byte stores, remained beyond the 100s timeout. Notably, Z3 solved the division table in Valida that \sys could not, as bit-vector theory provides formal decision procedures that interval approximations lack, suggesting that hybrid interval-SMT approaches could further improve the zkVM verification.

\footnotetext{The two operations in Valida verified by \sys and Z3 belong to different sets without any overlap.}

\begin{tcolorbox}[colback=gray!8, colframe=black, boxrule=0.4pt, arc=2pt, left=4pt, right=4pt, top=3pt, bottom=3pt]
\textbf{Answer to RQ1:} \sys outperforms SMT verification in both efficiency and coverage, achieving a {48.0\% success rate} (a {16.5 percentage point (pp) improvement} over Z3) and a {$51.5\times$ speedup}, establishing a new performance benchmark for automated ZK constraint analysis.
\end{tcolorbox}

\subsection{RQ2: Range Verification Performance}

\begin{table}[!th]
\centering
\caption{Speedup of \sys's \rangeverification over the iterative point-wise baseline, across input volumes $|V|$. %The baseline cost is $|V|$ times the mean single-point verification time from Tab.~\ref{tab:rq1} (i.e., the cost of invoking \sys independently at every point in the range).
}
\label{tab:rq2}
\begin{tabular}{c|cccc}
\toprule
    zkVM & $|V|=2^2$ & $|V|=2^6$ & $|V|=2^{10}$ & $|V|=2^{14}$ \\ \hline
Valida & 3.9$\times$ & 30.1$\times$ & 49.7$\times$ & 63.1$\times$ \\
Sphinx & 2.8$\times$ & 18.3$\times$ & 35.1$\times$ & 42.1$\times$ \\
Pico   & 3.0$\times$ & 27.3$\times$ & 39.5$\times$ & 44.2$\times$ \\
SP1    & 1.7$\times$ & 10.0$\times$ & 28.4$\times$ & 34.1$\times$ \\
Ziren  & 2.5$\times$ & 16.2$\times$ & 31.2$\times$ & 39.7$\times$ \\
 \hline
\end{tabular}
\end{table}

Tab.~\ref{tab:rq2} reports the speedup of \sys's \rangeverification over a theoretical iterative point-wise verification baseline, where the baseline cost for a volume $|V|$ is computed as $|V|$ times the mean single-point verification time from RQ1. Even at the smallest volume $|V| = 2^2$, \sys already outperforms point-wise enumeration, because a single interval-lattice traversal covers multiple discrete inputs with shared pruning. As $|V|$ grows, the gap widens rapidly: at $|V| = 2^{14}$, \sys achieves $34.1\times$ on SP1, $39.7\times$ on Ziren, $42.1\times$ on Sphinx, $44.2\times$ on Pico, and $63.1\times$ on Valida. This sublinear growth in verification time with respect to volume is the core benefit of \rangeverification: branch-and-bound collapses large contiguous regions into a single sound judgment, whereas iterative point-wise verification pays the full verification cost at every point.

Fig.~\ref{fig:range_sweep} shows this effect on the ADD table across all zkVMs, plotting \sys's measured verification time against the corresponding theoretical iterative point-wise verification cost as the volume increases. The solid curves lie strictly below the dashed iterative point-wise verification references, and the vertical gap widens as volume grows, confirming that \sys verifies orders of magnitude more inputs than point-wise testing would allow; equivalently, for a fixed target volume, \sys reaches it substantially faster. See Appendix~\ref{appendix:smt:range} for the SMT solver baseline for range verification, where the Z3 SMT solver exhibits a scalability bottleneck.

\begin{figure*}[!th]
    \centering
    \includegraphics[width=0.99\linewidth]{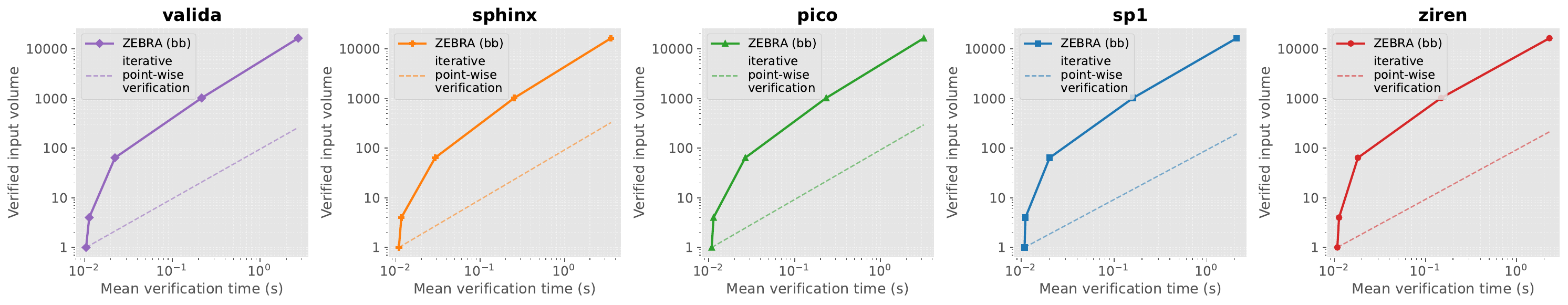}
\caption{Verified input volume (y) versus mean verification time (x)
for \sys on the ADD table. The widening vertical gap at larger volumes indicates that interval-lattice reasoning amortizes increasingly well as the range
grows. Z3 SMT solver cannot certify any of these tables even for a singleton input.}
    \label{fig:range_sweep}
\end{figure*}

\begin{tcolorbox}[colback=gray!8, colframe=black, boxrule=0.4pt, arc=2pt, left=4pt, right=4pt, top=3pt, bottom=3pt]
\textbf{Answer to RQ2.} \sys achieves substantial amortized efficiency by processing input domains as unified interval lattices rather than discrete points, yielding up to a $63\times$ speedup over the theoretical iterative point-wise verification baseline.
\end{tcolorbox}

\subsection{RQ3: Parallel Search Scalability}

Tab.~\ref{tab:worker} shows speedups from $W=1$ baseline. Ziren drops from 0.521s to 0.246s (2.12$\times$) at $W=4$, while SP1, Sphinx, and Pico achieve 1.24–1.63$\times$, showing that \sys effectively distributes independent sub-lattices across workers.

\begin{table}[!th]
\centering
\caption{Impact of parallelization on verification throughput with respect to the number of parallel worker threads $W$}
\label{tab:worker}
\begin{tabular}{c|cccc}
\toprule
zkVM & $W=1$ & $W=2$ & $W=3$ & $W=4$ \\ \hline
Valida & 0.011 & 0.011 & 0.011 & 0.010 \\
Sphinx & 0.257 & 0.168 & 0.166  & 0.158 \\
Pico & 0.179 & 0.156 & 0.150 & 0.144 \\
SP1 & 0.148 & 0.123 & 0.105 & 0.100 \\
Ziren & 0.521 & 0.271 & 0.260 & 0.246 \\
 \hline
\end{tabular}
\end{table}

\begin{tcolorbox}[colback=gray!8, colframe=black, boxrule=0.4pt, arc=2pt, left=4pt, right=4pt, top=3pt, bottom=3pt]
\textbf{Answer to RQ3.} The multi-threaded parallelized algorithm significantly enhances \sys's throughput, providing up to a 2.12x speedup with 4 workers.
\end{tcolorbox}

\subsection{RQ4: Ablation Study of Optimizations}

\begin{table}[!th]
\centering
\caption{Impact of Backward Refinement (BR) and Certified Circuit Elision (CCE). Suc\% is opcode-wise average. BR and CCE contribute to coverage by 24.4 and 11.8 pp, respectively.}
\begin{tabular}{c|ccc|ccc|ccc}
\hline
zkVM  & \multicolumn{3}{c|}{\sys} & \multicolumn{3}{c|}{\begin{tabular}[c]{@{}c@{}}\sys \\ w.o. BR\end{tabular}} & \multicolumn{3}{c}{\begin{tabular}[c]{@{}c@{}}\sys \\ w.o.  CCE\end{tabular}} \\ \hline
          & \#T & \#O & Suc\%   & \#T & \#O & Suc\%   & \#T & \#O & Suc\%   \\ \hline
Valida &   2     & 2     & 18.2\%  & 2     & 2     & 18.2\%  & 2     & 2     & 18.2\%  \\ 
Sphinx &   4     & 10    & 55.6\%  & 2     & 5     & 27.8\%  & 4     & 10    & 55.6\%  \\
Pico   &   4     & 10    & 40.0\%  & 2     & 5     & 20.0\%  & 4     & 10    & 40.0\%  \\
SP1    &   6     & 16    & 50.0\%  & 3     & 7     & 21.9\%  & 5     & 12    & 37.5\%  \\
Ziren  &   8     & 23    & 56.1\%  & 4     & 11    & 26.8\%  & 5     & 12    & 29.3\%  \\
\hline
Total  &  24    & 61    & 48.0\%  & 13    & 30    & 23.6\%  & 20    & 46    & 36.2\%  \\ \bottomrule
\end{tabular}
\label{tab:ablation}
\end{table}

Tab.~\ref{tab:ablation} shows the ablation result, where \#T and \#O show the number of locally verified tables and instruction opcodes, respectively. We conduct an ablation study to quantify the contributions of Backward Refinement (BR) and Certified Circuit Elision (CCE). BR is the primary driver of verification coverage: disabling it drops coverage from 48.0\% to 23.6\% (-31 operations), with the largest losses on Ziren (-12), SP1 (-9), Sphinx (-5) and Pico (-5). CCE is more architecture-specific, reducing coverage to 36.2\% (-11.8 pp, -15 operations) when disabled. The effect is strongest on Ziren (-11) and SP1 (-4), whose AIR constraints contain selector-multiplied sub-expressions that CCE simplifies; Pico is unaffected, as its constraint structure is already relatively canonical. CCE also improves efficiency, cutting mean verification time by up to 42\% on Ziren and SP1.

\begin{tcolorbox}[colback=gray!8, colframe=black, boxrule=0.4pt, arc=2pt, left=4pt, right=4pt, top=3pt, bottom=3pt]
\textbf{Answer to RQ4:} BR and CCE form a synergistic pipeline: BR contributes a 24.4 pp coverage increase via search-space reduction, and CCE adds up to 11.8 pp through constraint simplification. Neither alone matches the full \sys pipeline.
\end{tcolorbox}

\subsection{RQ5: Previously Unknown Bugs}

\sys identified 13 under- and over-constrained behaviors across the evaluated zkVMs (see Tab.~\ref{tab:bugs}). Of these, 11 are previously unknown bugs discovered via warm-started bug-finding mode; the remaining 2 (IDs 11 and 13) are intentional design choices surfaced during exhaustive verification. We classify severity into five categories: Critical affects arbitrary programs; High affects specific instruction families; Medium provides limited exploitability; Design Flaw denotes ISA-vs-constraint mismatches that occur only for corrupted programs; and Design Choice denotes behaviors structurally required to preserve constraint well-formedness.

ID 1 is a critical under-constrained bug in Valida where the $\texttt{is\_beq}$ and $\texttt{is\_bne}$ CPU flags are not linked to the opcode. Due to the lack of those links, a prover can toggle them to manipulate program counter transitions. IDs 2, 3, and 7 allow premature termination. In Valida, CPU constraints accept a matrix of padding rows, enabling empty executions. In Sphinx, termination is checked via $\texttt{next\_pc} = 0$, but this is bypassed for sufficiently short traces. IDs 4 and 5 further show soundness issues: branch instructions do not verify the second operand address, enabling substitution of comparison values, and the Memory table does not enforce that initial reads return zero, allowing arbitrary initial memory.

IDs 8, 9, 10, and 12 are classified as Design Flaw, reflecting gaps between VM logic and constraints due to corrupted programs. All of them are due to program counter overflow over a finite field, where the counter wraps around to zero and falsely signals termination, despite the assumption that only $\texttt{HALT}$ sets $\texttt{next\_pc}=0$. This behavior can happen only when the counter can be extremely large, and developers noted that these systems assume well-formed compiler-generated binaries, making such edge cases rare; thus, these behaviors remain unfixed in favor of performance.

IDs 11 and 13 are classified as Design Choice because the over-constrained behavior is structurally necessary: enforcing $\texttt{next\_pc} < p$ in the jump table prevents silent field
wrap-around, which would otherwise enable the PC-overflow bugs. We nonetheless label these as over-constrained rather than intended behavior as the zkVM executors themselves execute such jumps without error, and the rejection discards traces that are semantically valid executions. However, this reflects the zkVM designers' intention to trade completeness on these edge cases for prover performance, as supporting targets $\geq p$ would require an additional range-decomposition machinery.

\hideaki{Note that our findings are not tied to specific hand-crafted programs or inputs. Eight findings (IDs 1–3, 7–10, and 12) are reachable from any suitable warm-start program and are independent of its concrete contents. IDs 4–6 require only the presence of a relevant instruction family, such as branch or memory operations, rather than a particular program or input. IDs 11 and 13 were found during exhaustive verification rather than warm-started bug finding and therefore do not depend on the RQ5 trace corpus.}

% Please add the following required packages to your document preamble:
% \usepackage{booktabs}
\begin{table*}[!th]
\centering
\caption{Under- and over-constrained behaviors discovered by \sys.
\textbf{UC}~=~under-constrained; \textbf{OC}~=~over-constrained.
Severity: \textcolor{red!70!black}{\textbf{Critical}}: affects arbitrary programs;
\textcolor{orange!80!black}{\textbf{High}}: affects specific instruction families with broad consequences;
\textcolor{yellow!50!black}{\textbf{Medium}}: affects specific instruction families with limited exploitability;
\textcolor{black!60}{\textbf{Design Flaw}}: affects only malformed programs.
\textcolor{blue!60}{\textbf{Design Choice}}: structurally required to preserve constraint well-formedness.
}
\label{tab:bugs}
\begin{tabular}{@{}c|c|c|c|l|c@{}}
\toprule
 \textbf{zkVM} & \textbf{ID} & \textbf{Type} & \textbf{Severity} & \textbf{Description} & \textbf{Status}  \\ \midrule

\multirow{6}{*}{Valida} 

& 1 &  UC & \Crit & \begin{tabular}[c]{@{}l@{}}  Opcode flags such as \texttt{is\_beq} / \texttt{is\_bne} are not properly linked to the opcode; \\ suppressing these flags allows malicious provers to alter control flow.\\ \end{tabular} & Fixed \\
\cmidrule(l){3-6}

& 2 &UC &   \Crit & \begin{tabular}[c]{@{}l@{}}  The CPU table does not constrain \texttt{is\_real\,=\,1} on the first row; \\ a prover can produce an accepted proof of empty execution for any program.
\end{tabular} & Fixed \\
\cmidrule(l){3-6}

& 3 & UC & \Crit &\begin{tabular}[c]{@{}l@{}}  The Final instruction is not constrained to be \texttt{STOP} in any verification stage; \\ execution may halt arbitrarily for any program with an accepted proof.\end{tabular} & Fixed \\
\cmidrule(l){3-6}

& 4 & UC & \High & \begin{tabular}[c]{@{}l@{}}  Branch instructions do not verify the second operand when it is an address; \\ a prover can substitute an arbitrary address to alter the comparison value. \end{tabular} & In contact \\
\cmidrule(l){3-6}

& 5 &UC &  \High & \begin{tabular}[c]{@{}l@{}}  The \texttt{Memory} table does not constrain zero-initialized cells to hold zero; \\ a prover can supply non-zero initial values to any cell.\\  \end{tabular} & Confirmed \\
\cmidrule(l){3-6}

& 6 & UC & \Medium & \begin{tabular}[c]{@{}l@{}}  Written values lack a range check; distinct byte decompositions evaluate \\ identically over the finite field, allowing the modification of written values.
\end{tabular} & In contact \\
\midrule

\multirow{2}{*}{Sphinx} 

& 7 & UC & \Crit & \begin{tabular}[c]{@{}l@{}} When the execution trace is sufficiently short, program termination is not \\ checked, allowing a malicious prover to terminate prematurely. \end{tabular} & In contact \\
\cmidrule(l){3-6}

& 8 & UC & \Obs & \begin{tabular}[c]{@{}l@{}} only \texttt{next\_pc\,=\,0} is required to show the normal termination; counter overflow \\ over finite field can satisfy this condition without executing a halt instruction.
\end{tabular} & In contact \\  
\midrule

Pico & 9 & UC & \Obs & \begin{tabular}[c]{@{}l@{}} The verifier uses \texttt{next\_pc\,=\,0} as proof of normal termination by a halt syscall; \\ counter overflow produces the same
    field value, falsely signaling a clean halt. \end{tabular}  & Confirmed \\ 
\midrule

\multirow{2}{*}{SP1} 

 & 10 & UC & \Obs & \begin{tabular}[c]{@{}l@{}} Both \texttt{next\_pc\,=\,0} and \texttt{exit\_code\,=\,0} checks the successful termination, \\ yet neither condition distinguishes a halt syscall
    from counter overflow.
    \end{tabular} & In contact \\
\cmidrule(l){3-6}

 & 11 & OC & \Info & \begin{tabular}[c]{@{}l@{}} \texttt{Jump} table enforces \texttt{next\_pc}\,$<$\,$p$ where $p{=}2^{31}{-}2^{27}{+}1$; jump targets ${\geq}\,p$ are \\
    unrepresentable in a valid trace, causing the table to reject it.
\end{tabular} & Known \\ 
\midrule

\multirow{2}{*}{Ziren} 

 & 12 & UC & \Obs & \begin{tabular}[c]{@{}l@{}}
The same \texttt{next\_pc\,=\,0} check is applied over
    KoalaBear ($p{=}2^{31}{-}2^{24}{+}1$); \\ counter overflow satisfies the
    halt condition without a halt syscall.
\end{tabular} & Confirmed \\ 
\cmidrule(l){3-6}

 & 13 & OC & \Info & \begin{tabular}[c]{@{}l@{}} \texttt{Jump} table enforces \texttt{next\_pc}\,$<$\,$p$ where $p{=}2^{31}{-}2^{24}{+}1$; jump targets ${\geq}\,p$ are \\
    unrepresentable in a valid trace, causing the table to reject it.
\end{tabular}  & Known\\  
 \bottomrule
\end{tabular}
\end{table*}

\begin{tcolorbox}[colback=gray!8, colframe=black, boxrule=0.4pt, arc=2pt, left=4pt, right=4pt, top=3pt, bottom=3pt]
\textbf{Answer to RQ5.} \sys identified 11 previously unknown bugs across five real-world zkVMs, plus 2 known design choices, for 13 total flagged behaviors.
\end{tcolorbox}

\subsection{RQ6: Comparison with Fuzzer}

\smallskip \noindent \textbf{Overall Detection Efficiency}. We evaluate \sys’s bug-detection capability against Arguzz, a state-of-the-art mutation-based zkVM fuzzer. Arguzz was run for six hours per target but failed to identify any vulnerabilities in these benchmarks, consistent with prior limitations. In contrast, \sys flagged 13 under- and over-constrained behaviors, 11 of them previously unknown bugs, with a 100-second budget applied per table (on average 900 seconds per zkVM) rather than per discovered bug.

\smallskip \noindent \textbf{Root Cause Analysis}. \sys and Arguzz both target soundness and completeness bugs in zkVM constraints. For under-constrained detection, however, \sys explores a strictly larger search space than Arguzz due to their structural gap. The zkVM prover pipeline composes two stages: an executor $E$ mapping a program to an execution trace, and a table generator $G$ lifting the trace into full tables $\mathbf{T} \in \mathbb{D}$ by populating auxiliary columns such as selector flags, carry bits, and padding witnesses. Arguzz mutates $E$ itself into a faulty executor $E'$, so the tables it tests are of the form $G(E'(P))$. \sys searches directly over $\mathbb{D}$. Because $G$ is a deterministic function, Arguzz's reachable tables are contained in $\mathrm{Im}(G \circ E') \subsetneq \mathbb{D}$. Any under-constrained bug whose exploit requires an auxiliary column to take a value $G$ would not produce lies outside this image and is undetectable by executor-level mutation under any time budget; $ G$ does not bind a real malicious prover who writes $T$ directly.

Bug ID~1 (Valida) illustrates this. The \texttt{is\_beq} and \texttt{is\_bne} flags are set by $G$ as pure functions of the opcode, so Arguzz cannot observe them as independent variables. The constraint system fails to tie them back to the opcode, so a prover writing $T$ directly can set \texttt{is\_beq} $= 1$ on a non-BEQ instruction and hijack control flow. The bug lives exactly in the gap between $\mathrm{Im}(G \circ E')$ and $\mathbb{D}$, and the executor-mutation fuzzers cannot catch it.

\begin{tcolorbox}[colback=gray!8, colframe=black, boxrule=0.4pt, arc=2pt, left=4pt, right=4pt, top=3pt, bottom=3pt]
\textbf{Answer to RQ6.} \sys outperforms mutation-based fuzzers by operating directly on the table domain $\mathbb{D}$ rather than the executor-reachable state spaces. While Arguzz found no bugs even after 6 hours/zkVM, \sys detected 13 under- and over-constrained behaviors, including 11 zero-day bugs in 900 seconds/zkVM.
\end{tcolorbox}
\section{Related Work}

The correctness of zkVM constraint systems has become an active area of research as these systems gain adoption in production. Prior work falls into three broad categories: dynamic testing, SMT-based automated verification, and interactive theorem proving.

\smallskip \noindent \textbf{Dynamic Testing.} Traditional testing methodologies for zkVMs rely on point-wise sampling to surface logic discrepancies. Production zkVMs such as SP1 employ extensive negative unit tests that inject random mutations into the proof generation process to uncover under-constrained bugs. zkFuzz~\cite{takahashi2025zkfuzz} introduced the first general fuzzing framework for ZK circuits based on code mutation, and Arguzz~\cite{hochrainer2025arguzz} extended these ideas to zkVMs by combining metamorphic testing with fault injection into the virtual machine's executor. While effective at uncovering shallow bugs, these dynamic approaches provide no theoretical guarantee of completeness and are structurally unable to reach tables whose helper columns are populated only during witness generation.

\smallskip \noindent \textbf{SMT-based Verification.} Another direction is SMT solvers to verify ZK constraint systems automatically. Picus~\cite{pailoor2023automated} combines structural analysis with SMT-based reasoning to detect under-constrained bugs in ZK circuits, and has been extended to SP1~\cite{veridise-sp1-picus} and Ziren~\cite{audithub-ziren-picus}. These deployments, while promising, have three structural limits, and \hideaki{they are not directly comparable to our evaluated Z3 baseline and cannot serve as drop-in experimental baselines for \sys}: Picus targets determinism, so over-constrained behaviors are out of scope; published deployments verify only small sub-chips, as scaling is blocked by Plonky3's lack of modular constraint blocks and I/O annotations~\cite{veridise-sp1-picus}; and Picus's circuit model does not natively handle lookup or permutation constraints~\cite{ke2025consistency}. ZIVER~\cite{ke2025consistency} addresses the last point with an inductive consistency-checking algorithm for SP1 that supports lookups, but requires manual translation to a custom DSL and covers only 7 small sub-components. Finally, while finite-field–aware SMT theories such as cvc5 have recently emerged, they still lack support for constraint patterns common in zkVMs, such as disequality ($\neq$) constraints used in public value checks.

\smallskip \noindent \textbf{Interactive Theorem Proving.} Interactive theorem provers (ITPs) offer strong correctness guarantees at the cost of substantial manual effort. Cairo and Jolt have been formalized in Lean and ACL2~\cite{avigad2022verified,kwan2024verifying}. OSS tools such as Clean~\cite{clean-github} and zkLean~\cite{zklean-github} pursue the complementary approach of co-designing circuits alongside their correctness proofs. Despite their strong correctness guarantees, they require expert-written proofs for every new component or architectural change, and scale poorly as zkVM designs evolve.

\smallskip \noindent \textbf{Relation to abstract interpretation.} \hideaki{\sys's interval evaluator can be viewed as an abstract interpretation of a finite constraint system, and its branch-and-bound structure is related to interval constraint propagation. Classical abstract interpretation typically computes an inductive invariant or fixpoint over a program’s control-flow semantics. \sys instead analyzes a bounded, already-extracted zkVM constraint instance and uses three-valued interval evaluation as a sound pruning oracle inside a finite branch-and-bound search. Its output is a cardinality classification of canonical satisfying trace tables, rather than a reachability invariant. This setting additionally requires sound treatment of finite-field wrap-around, selector-gated polynomial constraints, lookup and public-value constraints, and canonicalization of padding, permutation, and auxiliary-witness redundancy. Thus, our contribution is not interval abstraction itself, but its finite-field specialization and integration with canonicalized cardinality reasoning for zkVM constraints.}
\section{Limitations and Future Work}

\noindent
{\bf Global verification.} \sys's correctness guarantees are table-localized: given a set of consistent zkVM tables, \sys certifies the target table. Bugs from cross-table interactions, such as lookup arguments whose two sides are individually satisfied but whose interface is mis-specified, \hideaki{fall outside this guarantee. No existing automated technique verifies these cross-table interfaces. A fixed-point iteration over local certificates produced by \sys can formally compose toward global correctness, analogous to layerwise bound propagation in neural network verification~\cite{alphabetacrown} albeit at greater computational cost. Reducing this cost via smarter branch-and-bound is a natural direction for future work.}

\noindent 
{\bf Precision loss from interval abstraction.} \hideaki{Interval abstraction is sound but incomplete: over-approximation may merge feasible and infeasible assignments and therefore return \texttt{Uncertain} even when the concrete constraint system has a definite answer. This imprecision cannot produce a false certificate or a spurious bug report, because \sys prunes a region only when it is definitely infeasible and reports an under-constrained bug only after constructing two distinct concrete satisfying traces. However, it may force additional splitting and, under a finite time budget, reduce verification coverage or delay the discovery of a second witness in under-constrained cases. Tighter abstract domains or a hybrid interval–SMT procedure could address these residual cases.}

\noindent
{\bf Verifying Canonicalizer Implementation.} While the canonicalizer algorithm is formally verified in Lean 4, verifying the equivalence between the Lean specification and the actual Rust implementation is left for future work.

\noindent
{\bf Verifying non-Instruction Table.} We exclude the non-instruction table from exhaustive verification in our current prototype due to its high complexity; scaling \sys to cover it through advanced heuristics such as opcode-based slicing is an interesting direction for future work.

\section{Conclusion}

We presented \sys, a fully automated framework for testing and verifying zkVM constraint systems. We introduced a formalization of zkVM vulnerabilities that separates representational redundancy from true under- and over-constrained behaviors, and developed an interval-based branch-and-bound engine for reasoning about sets of executions.

Across five real-world zkVMs, \sys uncovers 11 previously unknown bugs, including vulnerabilities enabling control-flow hijacking and forged proofs, all missed by a state-of-the-art fuzzer. Compared to SMT-based verification, \sys achieves higher coverage (+16.5 pp) and a $51.5\times$ speedup, while \rangeverification provides up to $63\times$ efficiency gains over point-wise testing.

These results suggest that reasoning about solution-set cardinality via interval-based branch-and-bound provides a practical and scalable foundation for zkVM constraint verification.

\section*{Acknowledgments} \label{s:ack}

We sincerely thank the anonymous reviewers for their insightful feedback. We would also like to thank Tim Roughgarden and Mahimna Kelkar at The Columbia-Ethereum Research Center for Blockchain Protocol Design for constructive and helpful discussions. Hideaki is supported by Funai Fellowship and Columbia-Ethereum Center Fellowship. Junfeng is also supported by DARPA ScAN, NSF CNS 2526620, Google, Samsung, Amazon, and Columbia CAIFI (with Capital One), Enterprise AI (with Infosys), and CDFT centers.

%%
%% The next two lines define the bibliography style to be used, and
%% the bibliography file.
\bibliographystyle{ACM-Reference-Format}
\bibliography{ref}

@String{Computing = "Computing" }

@String{Computer = "{IEEE} Computer" }

@String{Springer = "Springer-Verlag" }

@ArtifactSoftware{R,
    title = {R: A Language and Environment for Statistical Computing},
    author = {{R Core Team}},
    organization = {R Foundation for Statistical Computing},
    address = {Vienna, Austria},
    year = {2019},
    url = {https://www.R-project.org/},
}

@article{lavin2024survey,
  title={A Survey on the Applications of Zero-Knowledge Proofs},
  author={Lavin, Ryan and Liu, Xuekai and Mohanty, Hardhik and Norman, Logan and Zaarour, Giovanni and Krishnamachari, Bhaskar},
  journal={arXiv preprint arXiv:2408.00243},
  year={2024}
}

@inproceedings{wen2024practical,
  title={Practical Security Analysis of $\{$Zero-Knowledge$\}$ Proof Circuits},
  author={Wen, Hongbo and Stephens, Jon and Chen, Yanju and Ferles, Kostas and Pailoor, Shankara and Charbonnet, Kyle and Dillig, Isil and Feng, Yu},
  booktitle={33rd USENIX Security Symposium (USENIX Security 24)},
  pages={1471--1487},
  year={2024}
}

@article{pailoor2023automated,
  title={Automated detection of under-constrained circuits in zero-knowledge proofs},
  author={Pailoor, Shankara and Chen, Yanju and Wang, Franklyn and Rodr{\'\i}guez, Clara and Van Geffen, Jacob and Morton, Jason and Chu, Michael and Gu, Brian and Feng, Yu and Dillig, I{\c{s}}{\i}l},
  journal={Proceedings of the ACM on Programming Languages},
  volume={7},
  number={PLDI},
  pages={1510--1532},
  year={2023},
  publisher={ACM New York, NY, USA}
}

@article{belles2022circom,
  title={Circom: A circuit description language for building zero-knowledge applications},
  author={Bell{\'e}s-Mu{\~n}oz, Marta and Isabel, Miguel and Mu{\~n}oz-Tapia, Jose Luis and Rubio, Albert and Baylina, Jordi},
  journal={IEEE Transactions on Dependable and Secure Computing},
  volume={20},
  number={6},
  pages={4733--4751},
  year={2022},
  publisher={IEEE}
}

@misc{aztec2024noir,
  title = {Noir: The Universal Language of Zero-Knowledge},
  author = {{Aztec Network}},
  year = {2024},
  howpublished = {\url{https://noir-lang.org}},
  note = {Accessed: February 23, 2025}
}

@article{takahashi2025zkfuzz,
  title={zkFuzz: Foundation and Framework for Effective Fuzzing of Zero-Knowledge Circuits},
  author={Takahashi, Hideaki and Kim, Jihwan and Jana, Suman and Yang, Junfeng},
  journal={arXiv preprint arXiv:2504.11961},
  year={2025}
}

@misc{RiscZero2025SecurityDisclosure,
  author       = {{RISC Zero}},
  title        = {Security Disclosure: A missing constraint was recently discovered in the rv32im circuit},
  howpublished = {\url{https://x.com/RiscZero/status/1935404812146725042}},
  month        = jun,
  year         = 2025,
  note         = {RISC-V instructions in risc0-zkvm versions 2.0.0–2.0.2 were affected; patch released in version 2.1.0} 
}

@article{kwan2024verifying,
  title={Verifying jolt zkVM lookup semantics},
  author={Kwan, Carl and Dao, Quang and Thaler, Justin},
  journal={Cryptology ePrint Archive},
  year={2024}
}

@article{dokchitser2023zero,
  title={Zero knowledge virtual machine step by step},
  author={Dokchitser, Tim and Bulkin, Alexandr},
  journal={Cryptology ePrint Archive},
  year={2023}
}

@article{thomas2025valida,
  title={Valida ISA Spec, version 1.0: A zk-Optimized Instruction Set Architecture},
  author={Thomas, Morgan and Ratsimbazafy, Mamy and Bugaj, Marcin and Revill, Lewis and Modica, Carlo and Schmidt, Sebastian and Tan, Ventali and Lubarov, Daniel and Gillett, Max and Dai, Wei},
  journal={arXiv preprint arXiv:2505.08114},
  year={2025}
}

@article{hassanzadeh2025constraint,
  title={Constraint-Level Design of zkEVMs: Architectures, Trade-offs, and Evolution},
  author={Hassanzadeh-Nazarabadi, Yahya and Taheri-Boshrooyeh, Sanaz},
  journal={arXiv preprint arXiv:2510.05376},
  year={2025}
}

@article{habock2022multivariate,
  title={Multivariate lookups based on logarithmic derivatives},
  author={Hab{\"o}ck, Ulrich},
  journal={Cryptology ePrint Archive},
  year={2022}
}

@techreport{openvm2025whitepaper,
  title        = {OpenVM Whitepaper},
  author       = {OpenVM Contributors},
  institution  = {OpenVM},
  type         = {Whitepaper},
  month        = mar,
  day          = 31,
  year         = {2025},
  url          = {https://openvm.dev/whitepaper.pdf},
}

@article{fernandez2006interval,
  title={An interval constraint branching scheme for lattice domains.},
  author={Fern{\'a}ndez, Antonio J and Hill, Patricia M},
  journal={J. Univers. Comput. Sci.},
  volume={12},
  number={11},
  pages={1466--1499},
  year={2006}
}

@article{benno2026jolt,
  title={Jolt Atlas: Verifiable Inference via Lookup Arguments in Zero Knowledge},
  author={Benno, Wyatt and Centelles, Alberto and Douchet, Antoine and Gibran, Khalil},
  journal={arXiv preprint arXiv:2602.17452},
  year={2026}
}

@article{wang2026zkagent,
  title={zkAgent: Verifiable Agent Execution via One-Shot Complete LLM Inference Proof},
  author={Wang, Lizheng and Lou, Hancheng and Li, Chongrong and Yu, Yu and Hu, Yuncong},
  journal={Cryptology ePrint Archive},
  year={2026}
}

@article{ron2026verifiable,
  title={Verifiable Provenance of Software Artifacts with Zero-Knowledge Compilation},
  author={Ron, Javier and Monperrus, Martin},
  journal={arXiv preprint arXiv:2602.11887},
  year={2026}
}

@article{ke2025consistency,
  title={Consistency Verification for Zero-Knowledge Virtual Machine on Circuit-Irrelevant Representation},
  author={Ke, Jingyu and Liang, Boxuan and Li, Guoqiang},
  journal={Cryptology ePrint Archive},
  year={2025}
}

@article{hochrainer2025arguzz,
  title={Arguzz: Testing zkvms for soundness and completeness bugs},
  author={Hochrainer, Christoph and W{\"u}stholz, Valentin and Christakis, Maria},
  journal={arXiv preprint arXiv:2509.10819},
  year={2025}
}

@article{goldberg2021cairo,
  title={Cairo--a Turing-complete STARK-friendly CPU architecture},
  author={Goldberg, Lior and Papini, Shahar and Riabzev, Michael},
  journal={Cryptology ePrint Archive},
  year={2021}
}

@article{ben2018scalable,
  title={Scalable, transparent, and post-quantum secure computational integrity},
  author={Ben-Sasson, Eli and Bentov, Iddo and Horesh, Yinon and Riabzev, Michael},
  journal={Cryptology ePrint Archive},
  year={2018}
}

@misc{ziren2025,
  title        = {Ziren},
  author       = {{ProjectZKM}},
  howpublished = {\url{https://github.com/ProjectZKM/Ziren}},
  note         = {Accessed: 2026-04-21}
}

@misc{sphinx2025,
  title        = {Sphinx},
  author       = {{Argument Computer}},
  howpublished = {\url{https://github.com/argumentcomputer/sphinx}},
  note         = {Accessed: 2026-04-21}
}

@misc{pico2025,
  title        = {Pico},
  author       = {{Brevis Network}},
  howpublished = {\url{https://github.com/brevis-network/pico}},
  note         = {Accessed: 2026-04-21}
}

@misc{sp12026,
  title        = {SP1},
  author       = {{Succinct Labs}},
  year         = {2026},
  howpublished = {\url{https://github.com/succinctlabs/sp1}},
  note         = {Accessed: 2026-04-21}
}

@misc{valida2025,
  title        = {Valida VM},
  author       = {{Lita Foundation}},
  howpublished = {\url{https://github.com/lita-xyz/valida-vm}},
  note         = {Accessed: 2026-04-21}
}

@misc{valida2025core,
  title        = {Valida (Old Version)},
  author       = {{Valida}},
  howpublished = {\url{https://github.com/valida-xyz/valida}},
  note         = {Accessed: 2026-04-21}
}

@misc{veridise-sp1-picus,
  title={Verifying {SP1} circuit determinism with {Picus}: A collaboration between {Veridise} and {Succinct}},
  author={Pailoor, Shankara},
  year={2025},
  howpublished={\url{https://veridise.com/blog/audit-insights/verifying-sp1-circuit-determinism-with-picus-a-collaboration-between-veridise-and-succinct/}},
  note={Accessed 2026-04-23}
}

@misc{audithub-ziren-picus,
  title={Verifying {Ziren}'s circuits with {Picus}: Another step towards deterministic {zkVMs}},
  author={Pailoor, Shankara},
  year={2025},
  howpublished={\url{https://audithub.dev/blog/picus/verifying-zirens-circuits-with-picus-another-step-towards-deterministic-zkvms/}},
  note={Accessed 2026-04-23}
}

@misc{clean-github,
  title={{Clean}: A verified zk{EVM}},
  author={{Verified-zkEVM}},
  howpublished={\url{https://github.com/Verified-zkEVM/clean/}},
  note={Accessed 2026-04-23}
}

@misc{zklean-github,
  title={zk{Lean}},
  author={{Galois, Inc.}},
  howpublished={\url{https://github.com/GaloisInc/zkLean}},
  note={Accessed 2026-04-23}
}

@inproceedings{avigad2022verified,
  title={A verified algebraic representation of cairo program execution},
  author={Avigad, Jeremy and Goldberg, Lior and Levit, David and Seginer, Yoav and Titelman, Alon},
  booktitle={Proceedings of the 11th ACM SIGPLAN International Conference on Certified Programs and Proofs},
  pages={153--165},
  year={2022}
}

@inproceedings{alphabetacrown,
  title     = {Neural Network Verification with Branch-and-Bound 
               for General Nonlinearities},
  author    = {Shi, Zhouxing and Jin, Qirui and Kolter, J. Zico 
               and Jana, Suman and Hsieh, Cho-Jui and Zhang, Huan},
  booktitle = {Tools and Algorithms for the Construction and 
               Analysis of Systems (TACAS)},
  year      = {2025},
  publisher = {Springer}
}

\section{Ethical Considerations} We responsibly disclosed all previously unknown vulnerabilities discovered in public projects. We first contacted the maintainers privately and created public GitHub issues only with their approval. We also worked closely with each team to assess the severity of the issues and support the development of appropriate fixes.

%ADD
%we help devs fix problems.....

%% Appendices
\ifarxiv
  \appendix
  \section{Overview of zkVMs}
\label{appendix:sp}

\paragraph{Sparsity in zkVMs}

\begin{table}[h]
\centering
\caption{Constraint density $\rho = T_{\mathrm{arith}}(\mathcal{C}) / (|\mathcal{C}| \cdot m)$
measured across five real-world zkVMs. $|\mathcal{C}|$ is the number of AIR
constraints, $m$ is the total number of trace cell variables, and $T_{\mathrm{arith}}$ is the arithmetic gate count after common-subexpression elimination. All values are means across the evaluated tables per zkVM. Lower $\bar{\rho}$ indicates sparser constraint graphs.}
\label{tab:density}
\begin{tabular}{lrrrrr}
\toprule
zkVM & \#tables & $\overline{m}$ & $\overline{|\mathcal{C}|}$ & $\overline{T_{\mathrm{arith}}}$ & $\bar{\rho}$ \\
\midrule
Sphinx & 8  & 63.1 & 76.9  & 242.0  & 0.081  \\
Pico   & 9  & 59.3 & 79.1  & 237.3  & 0.086  \\
Ziren  & 13 & 41.8 & 51.7  & 171.5  & 0.107  \\
Valida & 9  & 36.2 & 33.8  & 139.2  & 0.187  \\
SP1    & 11 & 42.1 & 55.7  & 501.4  & 0.227  \\
\midrule
\textbf{Overall} & \textbf{50} & \textbf{47.4} & \textbf{58.3} & \textbf{261.4} & \textbf{0.140} \\
\bottomrule
\end{tabular}
\end{table}

Across 50 tables from five production zkVMs (Table~\ref{tab:density}), we observe $\rho$ ranging from $0.017$ (Sphinx CPU) to $0.809$ (SP1 Branch), with an overall mean of $\bar{\rho} = 0.140$ --- the typical constraint uses only {14}\% of its theoretical connectivity. Three of the five zkVMs (Sphinx, Pico, Ziren) are consistently sparse with $\bar{\rho} \leq 0.11$; SP1 and Valida contain a small number of dense tables (notably SP1 Branch and Valida Mul32) that raise their per-zkVM averages, but remain below $\bar{\rho} = 0.23$ overall.

\paragraph{Example Arithmetization}

Fig.~\ref{fig:examples-constraints} shows the example arithmetization of a zkVM.

\begin{figure*}[!th]
    \centering
    \includegraphics[width=0.95\linewidth]{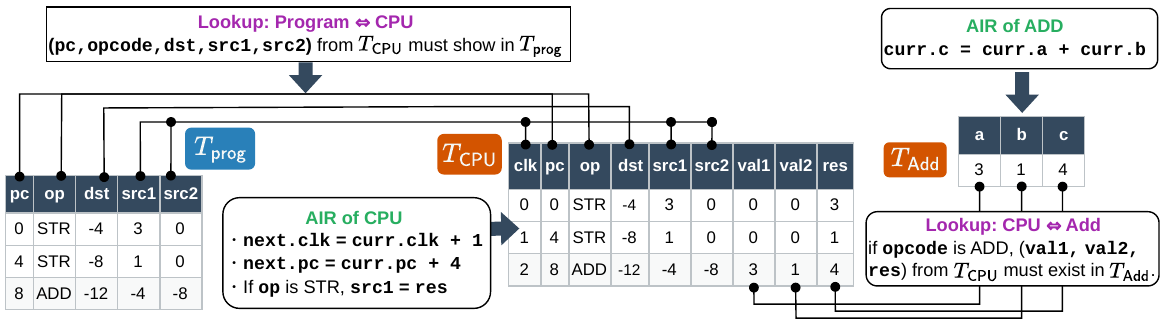}
    \caption{Conceptual arithmetization of a zkVM execution. The Program Table ($T_{\mathsf{prog}}$) is an instance table defining the compiled instructions. The CPU Table ($T_{\mathsf{CPU}}$) records the cycle-by-cycle execution state, and its AIR enforces the main CPU logic, such as valid transitions of \texttt{pc} and \texttt{clk}. The ADD Table ($T_\mathsf{Add}$) validates the correctness of the addition operation. Global consistency is enforced by lookup constraints, ensuring all entries in $T_{\mathsf{CPU}}$ match auxiliary tables.}
    \label{fig:examples-constraints}
\end{figure*}

\section{Design Details}

\begin{figure}[!th]
    \centering
    \includegraphics[width=0.99\linewidth]{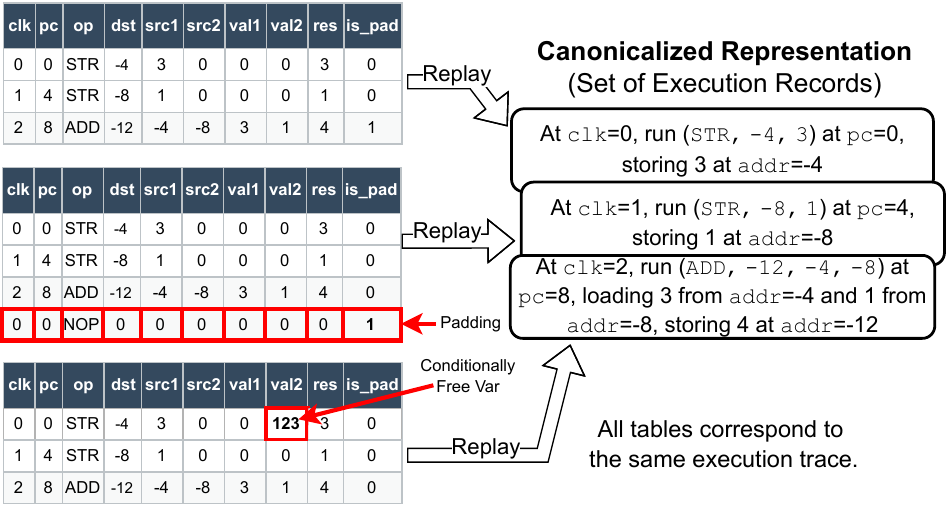}
    \caption{To resolve representational redundancies in trace tables like null-row padding and conditionally free variables, a canonicalizer ($\mathcal{R}$) maps trace tables to a unique set of semantic execution records, ensuring a one-to-one mapping.}
    \label{fig:placeholder}
\end{figure}

\subsection{Sound Over-Approximation of Constraints}
\label{appendix:lookup}

This section provides the rigorous formalization of the interval-based sound over-approximations used by \sys. We transition from concrete constraints over the finite field $\mathbb{F}_p$ to an abstract interval domain $\mathbb{I}$, ensuring that the conditions defined in Definition~\ref{def-soa} are strictly maintained for all lookup categories.

\subsection{Preliminaries and Interval Arithmetic}
\label{appendix:subsec:ia}

We first provide several additional notations related to intervals.

\textbf{1) Interval Membership}: For an integer $a \in \mathbb{Z}$ and an interval $[L, R] \in \mathbb{I}$, we denote $a \in [L, R]$ if $L \leq a \leq R$, and $a = [L, R]$ if $a = L = R$. 

\textbf{2) Interval Arithmetic}: The basic arithmetic operations, add, sub, and mul, are defined as: \begin{align*}
    [a, b] + [c, d] &= [a + c, b + d] \\
    [a, b] - [c, d] &= [a - d, b - c] \\
    [a, b] \cdot[c, d] &= [\min(ac, ad, bc, bd), \max(ac, ad, bc, bd)]
\end{align*}

\textbf{3) Row Membership}: We view a table as a set of rows. For instance, when a row vector $\vec{v} \in \mathbb{F}^{W}$ appears as a row of a table $T \in \mathbb{F}_p^{H \times W}$, we write $\vec{v} \in T$.

\textbf{4) Interval Area}: Let $|[L, R]| = R - L + 1$, meaning the area of the interval $[L, R]$. For a vector of intervals $\vec{I} = (I_1, I_2, \dots, I_w)$, we define $|\vec{I}| = \prod_{j=1}^{w} |I_j|$.

\textbf{5) Column Projection} We define a column projection $\mathfrak{f}_i$ as a functional that extracts a value from the table based on row $i$. Formally, $\mathfrak{f}_i(T)$ returns a linear combination of cells in the $i$-th row: \begin{equation}
    \mathfrak{f}_i(T) = \sum_{j=0}^{W-1} \alpha_j \cdot T_{i,j}
\end{equation}, where $\alpha_j \in \mathbb{F}_p$ is a fixed coefficient specific to each $\mathfrak{f}$. In the interval domain, $\mathfrak{f}_i(\widetilde{T})$ is similarly computed using interval arithmetic.

\textbf{6) Vector Projections}: We define a vector projection $\vec{\mathfrak{f}}_i$ as a tuple of $k$ column projections:

\begin{equation}
    \vec{\mathfrak{f}}_i(T) = \left( \mathfrak{f}_{i,1}(T), \mathfrak{f}_{i,2}(T), \dots, \mathfrak{f}_{i,k}(T) \right) \in \mathbb{F}_p^k
\end{equation} In the interval domain, $\vec{\mathfrak{f}}_i(\widetilde{T}) = \left( I_{i,1}, I_{i,2}, \dots, I_{i,k} \right) \in \mathbb{I}^k$ is the Cartesian product of the resulting intervals.

\textbf{7) Interval Disjointness} Two intervals $I, J$ are disjoint ($I \cap J = \emptyset$) if $[[I]]_p \cap [[J]]_p = \emptyset$. An interval vector $\vec{I} \in \mathbb{F}_p^{W}$ is disjoint from a table $T \subseteq \mathbb{F}_p^{H \times W}$, written as $[[\vec{I}]]_p \cap T = \emptyset$, if no vector $v \in [[\vec{I}]]_p$ exists in $T$.

\textbf{8) Vector Embedding} For a vector of intervals $\vec{I} = (I_1, \dots, I_k)$, we also define a vector embedding $[[\vec{I}]]_p = [[I_1]]_p \times \dots \times [[I_k]]_p \subseteq \mathbb{F}_p^k$.

\subsection{Sound Over-Approximation of Lookup Constraints}

We denote $\mathfrak{s}_i(\tilde{T})$ as the interval projection of the selector for the $i$-th row. A row is considered "active" if $1 \in \mathfrak{s}_i(\tilde{T})$.

\paragraph{L-1) Program Lookup}: The Program Lookup $\mathcal{L}_{\mathrm{prog}}$ ensures that the executed instruction is consistent with the static program binary $T_{\text{prog}}$. Let $\vec{prog}_i$ be the vector projection of the columns corresponding to the program counter, opcode, and operands. \begin{equation}
\mathcal{L}_{\mathrm{prog}}(T) := \forall{i}.\;\, \isreal_i(T) = 1 \implies \overrightarrow{\mathfrak{prog}}_i(T) \in T_{\mathrm{prog}}
\end{equation}

The sound over-approximation $\widetilde{\mathcal{L}_{\mathrm{prog}}}$ returns $\mathsf{True}$ if any active row's program-related columns collapse into a singleton vector that is present in $T_{\mathrm{prog}}$, and returns $\mathsf{False}$ if the interval vector of program-related columns is entirely disjoint from any row in $T_{\mathrm{prog}}$. In other cases, $\widetilde{\mathcal{L}_{\mathrm{prog}}}$ returns $\mathsf{Uncertain}$.

\begin{align}
\widetilde{\mathcal{L}}_{\mathrm{prog}}(\widetilde{T}) := \begin{cases}
\mathsf{True}& \text{if}\;\; \forall{i}.\; 1 \in \isreal_i(\widetilde{T}) \implies \\ &\quad\;|\overrightarrow{\mathfrak{prog}}_i(\widetilde{T})| = 1 \land \\ &\quad\;\overrightarrow{\mathfrak{prog}}_i(\widetilde{T}) \in T_{\mathrm{prog}} \\
\mathsf{False}& \text{if } \exists{i}.\; [1, 1] = \isreal_i(\widetilde{T}) \text{ and } \\& \quad\; \llbracket \overrightarrow{\mathfrak{prog}}_i(\widetilde{T}) \rrbracket_p \cap T_{\mathrm{prog}} = \emptyset \\
\mathsf{Uncertain}& \text{otherwise}
\end{cases}
\end{align}

\paragraph{L-2) Byte Lookup}: The Byte lookups enforce precomputed relations over a restricted range $[L, R]$, typically $[0, 255]$. Let $\mathfrak{a}_i, \mathfrak{b}_i, \mathfrak{c}_i$ be the projections for the output and operands of a byte operation $\star$. Then the semantic meaning of the Byte Lookup constraints can be written as:

\begin{align}
    \mathcal{L}_{\mathrm{byte}}(T) &:= \forall{i}.\;\isreal_i(T) = 1 \implies \\& 
    L \leq \opb_i(T) \leq R \;\,\land\;\, L \leq \opc_i(T) \leq R \;\,\land \;\, \\ &\label{eq:abc} \opa_i(T) = \opb_i(T) \,\star\, \opc_i(T)
\end{align} \noindent The Range-check operation is treated as a special case where Eq.~\ref{eq:abc} is omitted, effectively enforcing only the range constraint.

The sound over-approximation $\widetilde{\mathcal{L}_{\mathrm{byte}}}$ semantically lifts each byte operation $\star$ into an over-approximated operator $\tilde{\star}$ over intervals that uses bit-masking to track fixed and varying bits to derive tight bounds (see Tab.~\ref{tab:star}). $\widetilde{\mathcal{L}_{\mathrm{byte}}}$ returns $\mathsf{True}$ if the operand intervals fit entirely within the permitted byte range and the output interval matches the result of the lifted operation as a singleton. Conversely, it returns $\mathsf{False}$ if any operand is disjoint from the required range or if the claimed result interval is disjoint from the possible outcomes of the interval-based approximation of bitwise operation $\tilde{\star}$. In other cases, it just returns $\mathsf{Uncertain}$.

\begin{align}
    \widetilde{\mathcal{L}}_{\mathrm{byte}}&(T) := \begin{cases}
        \mathsf{True}  & \text{if }\; \forall{i.}\;1 \in \isreal_i(\widetilde{T}) \implies \\ &\;\;\opb_i \subseteq [L, R] \land \opc_i \subseteq [L, R] \land \\ &\;\;|\opa_i(\widetilde{T})| = 1 \land \opa_i(\widetilde{T}) = \opb_i(\widetilde{T}) \,\widetilde{\star}\, \opc_i(\widetilde{T})  \\
        \mathsf{False} & \text{if }\; \exists{i.}\; [1, 1] = \isreal_i(\widetilde{T}) \;\land\; \\ &\;\;\Big( \begin{aligned}
& \opb_i \not\subseteq [L, R]
\;\lor\; \opc_i \not\subseteq [L, R] \;\lor\\
& \opa_i(\widetilde{T}) \cap
(\opb_i(\widetilde{T}) \,\widetilde{\star}\, \opc_i(\widetilde{T}))
= \emptyset
\end{aligned} \Big) \\ 
        \mathsf{Uncertain} & \text{otherwise }\;
    \end{cases}
\end{align}

\begin{table*}[!th]
\centering
\caption{Sound Over-Approximated Constraints for Byte Lookup Tables. Here $a_{lo}, a_{hi}$ (resp.\ $b_{lo}, b_{hi}$) are the lower and upper bounds of the first (resp.\ second) byte operand interval; $v = (a_{lo} \oplus a_{hi}) \mid (b_{lo} \oplus b_{hi})$ is the varying-bit mask; all values are in $[0, 255]$.}
\label{tab:star}
\begin{tabular}{@{}lll@{}}
\toprule
Opcode & Interval Approximation ($\tilde{\star}$) & Soundness Notes \\ \midrule
\texttt{BitAnd} & $[(a_{lo} \mathbin{\&} b_{lo}) \mathbin{\&} \neg v,\;\; \min(a_{hi}, b_{hi})]$ & \begin{tabular}[c]{@{}l@{}}Bits forced to 0 in either operand cannot appear in result;\\ upper bound uses $\min(a_{hi}, b_{hi})$ since AND cannot exceed \\ either input.\end{tabular} \\ \hline
\texttt{BitOr} & $[(a_{lo} \mid b_{lo}),\;\; (a_{hi} \mid b_{hi}) \mid v]$ & \begin{tabular}[c]{@{}l@{}}Lower bound: bits set in both $a_{lo}$ and $b_{lo}$ are definitely set;\\ upper bound: all bits present in either operand's high end, \\ extended by varying bits.\end{tabular} \\ \hline
\texttt{BitXor} & $[(a_{lo} \oplus b_{lo}) \mathbin{\&} \neg v,\;\; (a_{lo} \oplus b_{lo}) \mid v]$ & \begin{tabular}[c]{@{}l@{}}Fixed bits (those not in $v$) produce a known XOR result;\\ varying bits are treated conservatively as either 0 or 1.\end{tabular} \\ \hline
\texttt{Lt} & \begin{tabular}[c]{@{}l@{}} $[1, 1]$ if $a_{hi} < b_{lo}$;\quad $[0, 0]$ if $a_{lo} \ge b_{hi}$;\quad else $[0, 1]$ \end{tabular} & \begin{tabular}[c]{@{}l@{}}Exact when intervals are non-overlapping; uncertain\\ when the comparison outcome depends on the concrete values.\end{tabular} \\ \hline
\texttt{MSB} & $[1,\; 1]$ if $a_{lo} \ge 128$;\quad $[0,\; 0]$ if $a_{hi} < 128$;\quad else $[0, 1]$ & \begin{tabular}[c]{@{}l@{}}Evaluates bit 7 (sign bit) of an 8-bit byte;\\ exact when the interval lies entirely above or below 128.\end{tabular} \\
\bottomrule
\end{tabular}
\end{table*}

\paragraph{L-3) ALU Lookup}: For lookups involving the Arithmetic Logic Unit (ALU), we reduce the constraints to semantically equivalent algebraic equations over integer intervals. Specifically, let $\mathfrak{a}_i$, $\mathfrak{b}_i$, and $\mathfrak{c}_i$ be the column projections corresponding to the output, the first input operand, and the second input operand, respectively. The concrete constraint semantically ensures that the output is the result of the VM-level operation $\diamond \in \{+, -, \times, \div, \dots\}$ applied to the inputs:

\begin{equation}
\mathcal{L}_{\mathrm{alu}}(T) := \forall{i}.\; \mathfrak{s}_i(T) = 1 \implies \mathfrak{a}_i(T) = \mathfrak{b}_i(T) ,\diamond, \mathfrak{c}_i(T)
\end{equation}

The sound approximation $\widetilde{\mathcal{L}}_{\mathrm{alu}}(\widetilde{T})$ leverages a library of sound interval approximations $\widetilde{\diamond}$ for each ALU operation (see Tab.~\ref{tab:diamond}). These transformers map the input interval to the smallest possible interval bound. The approximation is formally defined as:

\begin{equation}
\widetilde{\mathcal{L}}_{\mathrm{alu}}(\widetilde{T}) := \begin{cases}
    \mathsf{True} &\text{if }\,\forall{i}.\;\, 1 \in \isreal_i(\widetilde{T}) \implies \\ &\;\;\;\;\,|\opa_i(\widetilde{T})| = 1 \land \opa_i(\widetilde{T}) = \opb_i(\widetilde{T}) \,\widetilde{\diamond}\, \opc_i(\widetilde{T}) \\
    \mathsf{False} &\text{if }\,\exists{i}.\;\, [1, 1] = \isreal_i(\widetilde{T}) \text{ and } \\ &\;\;\;\;\,\opa_i(\widetilde{T}) \cap (\opb_i(\widetilde{T}) \,\widetilde{\diamond}\, \opc_i(\widetilde{T})) = \emptyset \\
    \mathsf{Uncertain}
\end{cases}
\end{equation}

\noindent In the $\mathsf{False}$ case, the evaluator identifies that the interval of the claimed result $\mathfrak{a}_i(\widetilde{T})$ is disjoint from the range of possible outcomes produced by the operands under the operation $\widetilde{\diamond}$.

% Please add the following required packages to your document preamble:
% \usepackage{booktabs}
\begin{table*}[!th]
\centering
\caption{Sound Over-Approximated Constraints for ALU Lookup Tables. $W = 2^{32}$; $a_{lo}, a_{hi}$ and $b_{lo}, b_{hi}$ are the lower and upper bounds of the two 32-bit input operand intervals. Signed operations first reinterpret the unsigned interval as a two 's-complement signed interval via the mapping $x \mapsto x - W$ for $x \ge 2^{31}$. ``Full word'' denotes $[0, W-1]$.}
\label{tab:diamond}
\begin{tabular}{@{}lll@{}}
\toprule
Opcode & Interval Approximation ($\tilde{\diamond}$) & Soundness Notes \\ \midrule
\texttt{AddU} & \begin{tabular}[c]{@{}l@{}} $[a_{lo}{+}b_{lo},\; a_{hi}{+}b_{hi}]$ if no overflow;\\ $[a_{lo}{+}b_{lo}{-}W,\; a_{hi}{+}b_{hi}{-}W]$ if overflow certain;\\ full word otherwise \end{tabular} & \begin{tabular}[c]{@{}l@{}}Exact for singletons (mod $W$); overflow is certain when\\ $a_{lo}+b_{lo} \ge W$, impossible when $a_{hi}+b_{hi} < W$.\end{tabular} \\ \hline
\texttt{SubU} & \begin{tabular}[c]{@{}l@{}} $[a_{lo}{-}b_{hi},\; a_{hi}{-}b_{lo}]$ if no underflow;\\ $[a_{lo}{+}W{-}b_{hi},\; a_{hi}{+}W{-}b_{lo}]$ if underflow certain;\\ full word otherwise \end{tabular} & \begin{tabular}[c]{@{}l@{}}Exact for singletons (mod $W$); underflow is certain when\\ $a_{hi} < b_{lo}$, impossible when $a_{lo} \ge b_{hi}$.\end{tabular} \\ \hline
\texttt{Mul} & $[a_{lo} \times b_{lo},\; a_{hi} \times b_{hi}] \bmod W$ & \begin{tabular}[c]{@{}l@{}}Exact for singletons (mod $W$); returns full word when\\ $a_{hi} \times b_{hi} \ge W$ and operands are not singletons.\end{tabular} \\ \hline
\texttt{MulH\,/\,MulHS} & $[\min(\mathcal{C}) \gg 32,\;\; \max(\mathcal{C}) \gg 32] \bmod W$ & \begin{tabular}[c]{@{}l@{}}Signed high-word: $\mathcal{C} = \{a_{lo}b_{lo},\,a_{lo}b_{hi},\,a_{hi}b_{lo},\,a_{hi}b_{hi}\}$\\ over signed 32-bit interpretation; arithmetic right-shift \\ by 32.\end{tabular} \\ \hline
\texttt{MulHU} & $[\lfloor a_{lo} \times b_{lo} / W \rfloor,\;\; \lfloor a_{hi} \times b_{hi} / W \rfloor]$ & \begin{tabular}[c]{@{}l@{}}Unsigned high-word: logical right-shift of 64-bit \\ product by 32; monotone in both operands so \\ endpoint products bound the result.\end{tabular} \\ \hline
\texttt{MulTL\,/\,MulTH} & \begin{tabular}[c]{@{}l@{}} $\text{lo}(P) \in [\min(\mathcal{C}) \mathbin{\&} (W{-}1),\; \max(\mathcal{C}) \mathbin{\&} (W{-}1)]$\\ $\text{hi}(P) \in [\min(\mathcal{C}) \gg 32,\; \max(\mathcal{C}) \gg 32]$ \end{tabular} & \begin{tabular}[c]{@{}l@{}}Signed full 64-bit product split into lower (TL) and \\ upper (TH) 32-bit halves; \\ $\mathcal{C}$ over signed interpretation of both intervals.\end{tabular} \\ \hline
\texttt{MulTUL\,/\,MulTUH} & \begin{tabular}[c]{@{}l@{}} $\text{lo}(P) \in [a_{lo}b_{lo} \mathbin{\&} (W{-}1),\; a_{hi}b_{hi} \mathbin{\&} (W{-}1)]$\\ $\text{hi}(P) \in [a_{lo}b_{lo} \gg 32,\; a_{hi}b_{hi} \gg 32]$ \end{tabular} & \begin{tabular}[c]{@{}l@{}}Unsigned full 64-bit product split into lower (TUL) and \\ upper (TUH) 32-bit halves; monotone so endpoint \\ products suffice.\end{tabular} \\ \hline
\texttt{And} & $[0,\;\; \min(a_{hi}, b_{hi})]$ & \begin{tabular}[c]{@{}l@{}}Exact for singletons; lower bound is 0 (AND can clear \\ all bits); upper bound $\min(a_{hi}, b_{hi})$ since AND cannot \\ exceed either input.\end{tabular} \\ \hline
\texttt{Or} & \begin{tabular}[c]{@{}l@{}} 
$[a_{lo} | b_{lo}, \;\; a_{hi} | b_{hi}]$ $\quad\;\;\;$ if $a$ and $b$ are singletons \\ 
$[\max(a_{lo}, b_{lo}),\;\; 2^{\lceil \log_2((a_{hi} \mid b_{hi})+1)\rceil} - 1]$ $\quad$ else \end{tabular} & \begin{tabular}[c]{@{}l@{}}Exact for singletons; lower bound: bits set in the larger \\ operand are preserved; upper bound: \\ next power-of-two minus 1 covering $a_{hi}|b_{hi}$.\end{tabular} \\ \hline
\texttt{Xor} & \begin{tabular}[c]{@{}l@{}}$[a_{lo} \oplus b_{lo}, \;\; a_{hi} \oplus b_{hi}]$ $\quad\;\;\;$ if $a$ and $b$ are singletons \\ $[0,\;\; 2^{\lceil \log_2((a_{hi} \mid b_{hi})+1)\rceil} - 1]$ else \end{tabular} & \begin{tabular}[c]{@{}l@{}}Exact for singletons; XOR can produce any value up \\ to the highest bit set in either operand, \\ hence lower bound is 0.\end{tabular} \\ \hline
\texttt{Div} & $[\lfloor a_{lo} / b_{hi} \rfloor,\;\; \lfloor a_{hi} / b_{lo} \rfloor]$ & \begin{tabular}[c]{@{}l@{}}Unsigned floor division; returns full word if $b_{lo} = 0$\\ (division by zero possible).\end{tabular} \\ \hline
\texttt{SDiv} & $[\min(\mathcal{C}),\;\; \max(\mathcal{C})]$ & \begin{tabular}[c]{@{}l@{}}Signed division; $\mathcal{C} = \{a_{lo}/b_{lo},\,a_{lo}/b_{hi},\,a_{hi}/b_{lo},\,a_{hi}/b_{hi}\}$\\ over signed interpretation; full word if $0 \in [b_{lo}, b_{hi}]$\\ or if $a_{lo} = \mathtt{INT\_MIN}$ and $b = [-1,-1]$.\end{tabular} \\ \hline
\texttt{Eq\,/\,NEq} & \begin{tabular}[c]{@{}l@{}} $[1, 1]$ if $a$ and $b$ are equal singletons;\\ $[0, 0]$ if $[a_{lo}, a_{hi}] \cap [b_{lo}, b_{hi}] = \emptyset$;\\ $[0, 1]$ otherwise \end{tabular} & \begin{tabular}[c]{@{}l@{}}Eq and NEq are duals: $\widetilde{\mathsf{NEq}}$ inverts the singleton \\ cases; exact whenever the intervals are disjoint \\ or both singletons.\end{tabular} \\ \hline
\texttt{Lt} & $[1,1]$ if $a_{hi} < b_{lo}$;\quad $[0,0]$ if $a_{lo} \ge b_{hi}$;\quad else $[0, 1]$ & \begin{tabular}[c]{@{}l@{}}Unsigned less-than over 32-bit words; exact when \\ intervals are entirely on one side of each other.\end{tabular} \\ \hline
\texttt{SLt} & $[1,1]$ if $a_{hi} < b_{lo}$;\quad $[0,0]$ if $a_{lo} \ge b_{hi}$;\quad else $[0, 1]$ & \begin{tabular}[c]{@{}l@{}}Signed less-than; bounds interpreted as two's \\-complement signed 32-bit integers before comparison.\end{tabular} \\ \hline
\texttt{SLe} & $[1,1]$ if $a_{hi} \le b_{lo}$;\quad $[0,0]$ if $a_{lo} > b_{hi}$;\quad else $[0, 1]$ & \begin{tabular}[c]{@{}l@{}}Signed less-than-or-equal; same as SLt but with \\ non-strict inequality, so the certain-true condition uses \\ $\le$ instead of $<$.\end{tabular} \\ \hline
\texttt{SRL} & \begin{tabular}[c]{@{}l@{}} $[0, 0]$ if $b_{lo} \ge 32$;\\ $[a_{lo} \gg b_{hi},\;\; a_{hi} \gg b_{lo}]$ if $b_{hi} < 32$;\\ $[0,\;\; a_{hi} \gg b_{lo}]$ otherwise \end{tabular} & \begin{tabular}[c]{@{}l@{}}Logical right-shift by a possibly-uncertain amount; \\ shifting by $\ge 32$ always yields 0; shifting by a range \\ $[b_{lo}, b_{hi}]$ yields minimum at the maximum shift and \\ maximum at the minimum shift.\end{tabular} \\
\bottomrule
\end{tabular}
\end{table*}

\paragraph{L-4) Control Flow Lookup}: Let $\programcounter_i$ and $\nextprogramcounter_i$ be column projections on the $i$-th row. The semantic meaning of the Control Flow Lookup is expressed via the opcode-specific function $\mathsf{cf}$, which computes the next program counter from the current PC and operands:

\begin{align}
\mathcal{L}_{\mathsf{cf}}(T)& := \forall{i}.\; \isreal_i(T) = 1 \implies \\ &\nextprogramcounter_i(T) = \mathsf{cf}(\programcounter_i(T), \opa_i(T), \opb_i(T), \opc_i(T))
\end{align}

We naturally derive the sound interval approximation $\widetilde{\mathsf{cf}}$ based on $\widetilde{\diamond}$, and the sound over-approximation $\widetilde{\mathcal{L}_{\mathrm{cf}}}$ returns $\mathsf{True}$ when $\text{pc}'$ is a singleton such that $\text{pc}' = \widetilde{\mathrm{cf}_{\circ}}({\text{pc}, a, b, c})$, $\mathsf{False}$ when $\text{pc}'$ and $\widetilde{\mathrm{cf}_{\circ}}({\text{pc}, a, b, c})$ are entirely disjoint, and $\mathsf{Uncertain}$ in other cases.

\begin{align}
&\widetilde{\mathcal{L}}_{\mathsf{cf}}(\widetilde{T}) \\ &:=\begin{cases}
    \mathsf{True}  &\text{if }\;\forall{i}.\;1\in \isreal_i(\widetilde{T}) \implies \\ &\quad|\nextprogramcounter_i(\widetilde{T})| = 1 \land \\&\quad\nextprogramcounter_i(\widetilde{T}) = \widetilde{\mathsf{cf}}(\programcounter_i(\widetilde{T}), \opa_i(\widetilde{T}), \opb_i(\widetilde{T}), \opc_i(\widetilde{T})) \\
    \mathsf{False} &\text{if }\;\exists{i}.\; [1, 1] = \isreal_i(\widetilde{T}) \;\land \\&\quad\nextprogramcounter_i(\widetilde{T}) \cap \widetilde{\mathsf{cf}}(\programcounter_i(\widetilde{T}), \opa_i(\widetilde{T}), \opb_i(\widetilde{T}), \opc_i(\widetilde{T})) = \emptyset \\
    \mathsf{Uncertain} &\text{otherwise} \\
\end{cases}
\end{align}

Table~\ref{tab:cf} lists the $\widetilde{\mathsf{cf}}$ approximation for each implemented control-flow opcode. In all branch instructions, $\opa$ and $\opb$ are the two operands compared by the branch condition, $\opc$ is the signed branch offset (relative to the current PC), and $\delta$ is the default sequential step size (e.g., 4 for RISC-V). The condition evaluation reuses the interval predicates $\widetilde{\mathsf{Eq}}$, $\widetilde{\mathsf{SLt}}$, $\widetilde{\mathsf{SLe}}$, and $\widetilde{\mathsf{Lt}}$ from Table~\ref{tab:diamond}.

\begin{table*}[!th]
\centering
\caption{Sound Over-Approximated Next-PC Computation $\widetilde{\mathsf{cf}}$ for Control-Flow Opcodes. $\mathit{pc}$ denotes the current program counter interval; $\delta$ is the default step size; $\oplus$ denotes unsigned modular addition (\texttt{AddU}). Branch outcomes are computed using the three-valued predicates from Table~\ref{tab:diamond}.}
\label{tab:cf}
\begin{tabular}{@{}lll@{}}
\toprule
Opcode & $\widetilde{\mathsf{cf}}(\mathit{pc}, \opa, \opb, \opc)$ & Notes \\ \midrule
\texttt{BEQ} & \begin{tabular}[c]{@{}l@{}} $\mathit{pc} \oplus \opc$ \quad if $\widetilde{\mathsf{Eq}}(\opa, \opb) = [1,1]$;\\ $\mathit{pc} \oplus [\delta, \delta]$ \quad if $\widetilde{\mathsf{Eq}}(\opa, \opb) = [0,0]$;\\ $(\mathit{pc} \oplus \opc) \cup (\mathit{pc} \oplus [\delta, \delta])$ \quad otherwise \end{tabular} & Branch taken iff $\opa = \opb$ (unsigned). \\ \hline
\texttt{BNE} & \begin{tabular}[c]{@{}l@{}} $\mathit{pc} \oplus \opc$ \quad if $\widetilde{\mathsf{Eq}}(\opa, \opb) = [0,0]$;\\ $\mathit{pc} \oplus [\delta, \delta]$ \quad if $\widetilde{\mathsf{Eq}}(\opa, \opb) = [1,1]$;\\ $(\mathit{pc} \oplus \opc) \cup (\mathit{pc} \oplus [\delta, \delta])$ \quad otherwise \end{tabular} & Branch taken iff $\opa \ne \opb$. \\ \hline
\texttt{BGE} & \begin{tabular}[c]{@{}l@{}} $\mathit{pc} \oplus \opc$ \quad if $\widetilde{\mathsf{SLt}}(\opa, \opb) = [0,0]$;\\ $\mathit{pc} \oplus [\delta, \delta]$ \quad if $\widetilde{\mathsf{SLt}}(\opa, \opb) = [1,1]$;\\ $(\mathit{pc} \oplus \opc) \cup (\mathit{pc} \oplus [\delta, \delta])$ \quad otherwise \end{tabular} & Branch taken iff $\opa \ge_s \opb$ (signed). \\ \hline
\texttt{BGT} & \begin{tabular}[c]{@{}l@{}} $\mathit{pc} \oplus \opc$ \quad if $\widetilde{\mathsf{SLe}}(\opa, \opb) = [0,0]$;\\ $\mathit{pc} \oplus [\delta, \delta]$ \quad if $\widetilde{\mathsf{SLe}}(\opa, \opb) = [1,1]$;\\ $(\mathit{pc} \oplus \opc) \cup (\mathit{pc} \oplus [\delta, \delta])$ \quad otherwise \end{tabular} & Branch taken iff $\opa >_s \opb$ (signed). \\ \hline
\texttt{BLE} & \begin{tabular}[c]{@{}l@{}} $\mathit{pc} \oplus \opc$ \quad if $\widetilde{\mathsf{SLe}}(\opa, \opb) = [1,1]$;\\ $\mathit{pc} \oplus [\delta, \delta]$ \quad if $\widetilde{\mathsf{SLe}}(\opa, \opb) = [0,0]$;\\ $(\mathit{pc} \oplus \opc) \cup (\mathit{pc} \oplus [\delta, \delta])$ \quad otherwise \end{tabular} & Branch taken iff $\opa \le_s \opb$ (signed). \\ \hline
\texttt{BLT} & \begin{tabular}[c]{@{}l@{}} $\mathit{pc} \oplus \opc$ \quad if $\widetilde{\mathsf{SLt}}(\opa, \opb) = [1,1]$;\\ $\mathit{pc} \oplus [\delta, \delta]$ \quad if $\widetilde{\mathsf{SLt}}(\opa, \opb) = [0,0]$;\\ $(\mathit{pc} \oplus \opc) \cup (\mathit{pc} \oplus [\delta, \delta])$ \quad otherwise \end{tabular} & Branch taken iff $\opa <_s \opb$ (signed). \\ \hline
\texttt{BGEU} & \begin{tabular}[c]{@{}l@{}} $\mathit{pc} \oplus \opc$ \quad if $\widetilde{\mathsf{Lt}}(\opa, \opb) = [0,0]$;\\ $\mathit{pc} \oplus [\delta, \delta]$ \quad if $\widetilde{\mathsf{Lt}}(\opa, \opb) = [1,1]$;\\ $(\mathit{pc} \oplus \opc) \cup (\mathit{pc} \oplus [\delta, \delta])$ \quad otherwise \end{tabular} & Branch taken iff $\opa \ge_u \opb$ (unsigned). \\ \hline
\texttt{BLTU} & \begin{tabular}[c]{@{}l@{}} $\mathit{pc} \oplus \opc$ \quad if $\widetilde{\mathsf{Lt}}(\opa, \opb) = [1,1]$;\\ $\mathit{pc} \oplus [\delta, \delta]$ \quad if $\widetilde{\mathsf{Lt}}(\opa, \opb) = [0,0]$;\\ $(\mathit{pc} \oplus \opc) \cup (\mathit{pc} \oplus [\delta, \delta])$ \quad otherwise \end{tabular} & Branch taken iff $\opa <_u \opb$ (unsigned). \\ \hline
\texttt{JAL} & $\mathit{pc} \oplus \opb$ & Unconditional jump; target $= \mathit{pc} + \opb$ (signed offset). \\ \hline
\texttt{JALR} & $\widetilde{\mathsf{AddU}}(\opb, \opc)$ & Indirect jump; target $= (\opb + \opc) \bmod W$ (register + immediate). \\ \hline
\texttt{Jumpi} & $\opb$ & Absolute indirect jump; target is the value of register $\opb$ directly. \\
\bottomrule
\end{tabular}
\end{table*}

\paragraph{L-5) Memory Lookup}: The Memory Lookup constraint $\mathcal{L}_{\text{mem}}$ ensures global Read-After-Write (RAW) consistency. For each row $i$, let $\mathfrak{addr}_i(\tilde{T})$, $\mathfrak{val}_i(\tilde{T})$, and $\mathfrak{type}_i(\tilde{T})$ be the column projections for the address, value, and access type, respectively.

\sys maintains an Abstract Segmented Memory state $\hat{\mathcal{M}}$, which is updated or queried chronologically. The abstract evaluator $\widetilde{\mathcal{L}}_{\text{mem}}$ is defined by the aggregate result of a consistency check performed on each operation:

\begin{align}
&\widetilde{\mathcal{L}}_{\text{mem}}(\tilde{T}) = \\
&\begin{cases} 
\mathsf{True} & \text{if } \forall i, 1 \in \mathfrak{s}_i(\tilde{T}) \implies \\ &\textsf{IsRAM}(\mathfrak{addr}_i(\tilde{T}), \mathfrak{val}_i(\tilde{T}), \mathfrak{type}_i(\tilde{T}), \hat{\mathcal{M}}) = \mathsf{True} \\
\mathsf{False} & \text{if } \exists i, [1, 1] = \mathfrak{s}_i(\tilde{T}) \wedge \\ &\textsf{IsRAM}(\mathfrak{addr}_i(\tilde{T}), \mathfrak{val}_i(\tilde{T}), \mathfrak{type}_i(\tilde{T}), \hat{\mathcal{M}}) = \mathsf{False} \\
\mathsf{Uncertain} & \text{otherwise}
\end{cases}
\end{align}

\sys initializes $\hat{\mathcal{M}}$ with a single segment
$([0, 2^W - 1], [0, 0])$ representing zero-initialization of the
entire address space, so every address has a defined stored value
throughout the execution.

The internal logic of $\textsf{IsRAM}(\widehat{a}, \widehat{v}, \widehat{t}, \hat{\mathcal{M}})$ handles the stateful nature of
memory. For the write case, \textsf{IsRAM} updates $\hat{\mathcal{M}}$ by splitting any segments that intersect with $\widehat{a}$ and inserting the new value $\widehat{v}$. Since any write is locally valid, this operation always returns $\mathsf{True}$. For the read case, the evaluator verifies $\widehat{v}$ against the segments in $\hat{\mathcal{M}}$ that intersect with $\widehat{a}$:
\begin{itemize}
    \item Returns $\mathsf{True}$ if $\widehat{v}$ contains the
    stored value interval of every intersecting segment.
    \item Returns $\mathsf{False}$ if $\widehat{v}$ is disjoint from
    the stored value of some intersecting segment.
    \item Returns $\mathsf{Uncertain}$ otherwise.
\end{itemize}

\subsection{Proof of Theorem~\ref{thm:sc}}
\label{subsec:proof}

\begin{proof}[Proof Sketch]
The proof proceeds by establishing three properties: termination, soundness, and completeness.

\paragraph{Termination} The search space is a finite integer lattice $\mathbb{I}^{H \times W}$ defined over the range $[0, p-1]$. In each iteration of the branch-and-bound search, SplitInterval partitions a "divisible" cell ($a \neq b$) into strictly smaller sub-intervals. Because the volume of the intervals decreases monotonically and the search space is discrete and finite, the algorithm must terminate.

\paragraph{Local Soundness} By its definition, local soundness requires that every reported trace satisfies the constraints. In Algorithm 1, a trace $T$ is added to $Q$ only if $\tilde{C}(\tilde{T}) = \text{True}$. According to the Soundness Condition (Definition~\ref{def-soa}), if $\tilde{C}(\tilde{T}) = \text{True}$, then for all $T \in [[\tilde{T}]]_p$, it is guaranteed that $C(T) = \text{True}$. Thus, every element in $Q$ is a mathematically valid execution trace.

\paragraph{Local Completeness} Local Completeness requires that no valid trace is omitted. This is guaranteed by two mechanisms. 

First, the $\mathsf{SplitInterval}$ function maintains a total cover of the search space, ensuring the union of child nodes equals the parent node.

Second, A branch is only discarded if $\tilde{C}(\tilde{T}) = \text{False}$. By Definition~\ref{def-soa}, if $\tilde{C}(\tilde{T}) = \text{False}$, then no $T \in [[\tilde{T}]]_p$ exists such that $C(T) = \text{True}$. Therefore, pruning never discards a satisfying solution.

Finally, the canonicalizer $\mathcal{R}$ ensures that representational redundancies (such as padding or permutations) are mapped into a unique semantic normal form. This ensures that the cardinality of $Q$ accurately reflects the number of semantically distinct solutions, identifying under-constrained ($|Q| > 1$) or over-constrained ($|Q| = 0$) behaviors.
\end{proof}

\section{Implementation Details}
\label{appendix:implementation}

\subsection{Canonicalization}
\label{app:canonicalization}

{To ensure that our solution-set cardinality criterion $(\lvert Q \rvert)$ accurately distinguishes semantic bugs from representational artifacts, we formalize our canonicalizer in Lean 4. This lets us prove that the canonicalizer is invariant under representational artifacts of the table generator and one-to-one on semantically distinct executions.}

\smallskip
\noindent \textbf{Core Definitions.} {We define a \lstinline{Config} to encapsulate the domain-specific logic for identifying valid state transitions and projecting them into a normalized form. The canonicalizer $\mathcal{R}$ maps a trace to a unique set of semantic execution records.}

\begin{lstlisting}
structure Config (Repr : Type) where
  isReal : Row → Bool
  projectRow : Row → Repr

def canonicalize
    (cfg : Config Repr) (t : Trace) : Finset Repr :=
  ((t.filter cfg.isReal).map cfg.projectRow).toFinset
\end{lstlisting}

\noindent {The tuple type \lstinline{Repr} is per-table, for example, the ALU tuple (the first input, the second input, the output) for arithmetic instructions, or a control-flow tuple capturing program-counter transitions for branches and jumps.}

{Independent of the canonicalizer, we model the zkVM's notion of a semantic record for each table. \lstinline{Record} is the zkVM-level abstract semantics of an opcode execution, and the structure \lstinline{RecordIdentity Record Identity} carries an injection \lstinline{toIdentity} that assigns a unique identity to each record, witnessing that the zkVM's record type is fine-grained enough to distinguish every distinct execution of the opcode.}

\begin{lstlisting}
structure RecordIdentity (Record Identity : Type) where
  toIdentity : Record → Identity
  injective : Function.Injective toIdentity
\end{lstlisting}

\smallskip
\noindent \textbf{Faithful Table Generation Assumption}. {The single per-zkVM modeling assumption is that the table generator is \textit{faithful}: its non-padding projected rows reproduce exactly the records of the underlying execution, independently of row order, padding placement, or helper-column values.}

\begin{lstlisting}
def TableEncodesRecords
    (cfg : Config Identity) (recordId : RecordIdentity Record Identity)
    (records : RecordSet Record) (table : Trace) : Prop :=
  ∀ identity, identity ∈ Finset.image recordId.toIdentity records ↔
    ∃ row ∈ table, cfg.isReal row ∧ cfg.projectRow row = identity

def TableGeneratorFaithful
    (cfg : Config Identity) (recordId : RecordIdentity Record Identity)
    (generateTable : RecordSet Record → Trace) : Prop :=
  ∀ records, TableEncodesRecords cfg recordId records (generateTable records)
\end{lstlisting}

\smallskip
\noindent \textbf{Property 1: Generator-independence}. {We first prove that any two faithful generators, regardless of how they order rows or insert padding, result in the same canonicalized representation.}

\begin{lstlisting}
theorem canonicalize_generator_independent
    (cfg : Config Identity) (recordId : RecordIdentity Record Identity)
    (gen₁ gen₂ : RecordSet Record → Trace)
    (h₁ : TableGeneratorFaithful cfg recordId gen₁)
    (h₂ : TableGeneratorFaithful cfg recordId gen₂)
    (records : RecordSet Record) :
    canonicalize cfg (gen₁ records) = canonicalize cfg (gen₂ records) := by
  rw [canonicalize_generated_eq_recordIds cfg recordId gen₁ h₁ records,
      canonicalize_generated_eq_recordIds cfg recordId gen₂ h₂ records]
\end{lstlisting}

%\noindent This is the formal counterpart of \S3.4's collapse property: any structural difference between two faithful generators, such as padding, row order, helper-column population, is absorbed by the canonicalizer, so different raw representations of the same execution are mapped to a unique normal form.

\smallskip
\noindent\textbf{Property 2: Quotient--image bijection}. The canonical trace space $\widehat{\mathcal{D}} =
\mathcal{D}/{\sim_\mathcal{R}}$ of Definition~\ref{def:canonical-trace-table-space}, and the image $\text{Im}(\mathcal{R})$ are formalized as follows: % is in constructive bijection with the canonicalizer image $\mathrm{Im}(\mathcal{R})$.

\begin{lstlisting}
abbrev CanonicalTraceSpace (cfg : Config Repr) := 
    Quotient (semanticSetoid cfg)

abbrev CanonicalizerImage (cfg : Config Repr) := 
  {reprs : Finset Repr // ∃ table, 
   canonicalize cfg table = reprs }
\end{lstlisting}

\noindent {The map \lstinline{quotientToImage} then lifts \lstinline{canonicalize} through the quotient, sending each equivalence class to the canonical form shared by its representatives, and is proved to be bijective:}

\begin{lstlisting}
theorem canonicalTraceSpace_equiv_image 
  (cfg : Config Repr) :
  Function.Bijective (quotientToImage cfg)
\end{lstlisting}

\smallskip
\noindent \textbf{Property 3: Record-set injectivity.} {We further prove that distinct executions produce distinct canonical forms, provided the record encoding is injective.}

\begin{lstlisting}
theorem canonicalize_generated_eq_iff_records_eq
    (cfg : Config Identity) (recordId : RecordIdentity Record Identity)
    (generateTable : RecordSet Record → Trace)
    (hgen : TableGeneratorFaithful cfg recordId generateTable)
    (records₁ records₂ : RecordSet Record) :
    canonicalize cfg (generateTable records₁) =
      canonicalize cfg (generateTable records₂) ↔
    records₁ = records₂
\end{lstlisting}

\noindent {This theorem discharges the cardinality interpretation: $|Q|$ counts semantically distinct executions rather than representational variants. Injectivity of \lstinline|rid| is required because two different executions could otherwise collide on the same operand tuple.}

\smallskip
\noindent \textbf{Instantiation.} {Each evaluated zkVM specializes the framework by fixing the column layout of \lstinline|cfg| for each supported table. As an example, consider Fig.~\ref{fig:canonicalizer}, where the column map defines the purpose of each column, while the canonicalizer is shown in Rust-style pseudo-code. This figure is based on the actual implementation for SP1's AddSub table. This table identifies the first input operand as the
byte-limb group at columns~7--10, the second at columns~11--14, and the result at columns~0--3. The corresponding configuration assembles
these indices into the ALU operand tuple $(b, c, a)$:}

\begin{lstlisting}
def addConfig (isReal : Row → Bool) : Config ALU.Tuple :=
  mkALUConfig [7, 8, 9, 10] [11, 12, 13, 14] [0, 1, 2, 3] isReal
\end{lstlisting}

\noindent Then, the  record-set injectivity theorem states:

\begin{lstlisting}
theorem add_one_to_one (isReal : Row → Bool)
    (tableId : ALU.Tuple → Identity)
    (recordId : Generic.RecordIdentity Record Identity)
    (generateTable : Generic.RecordSet Record → Trace)
    (hgen : Generic.TableGeneratorFaithfulToIds
      (addConfig isReal) tableId recordId generateTable)
    (records₁ records₂ : Generic.RecordSet Record) :
    Generic.recordIdsOfTable (addConfig isReal) tableId (generateTable records₁) =
      Generic.recordIdsOfTable (addConfig isReal) tableId (generateTable records₂) ↔
    records₁ = records₂
\end{lstlisting}

\noindent {The corollary states that this add canonicalizer is one-to-one on record sets, conditional on the generator being faithful and on the record encoding being injective. Analogous corollaries hold for other zkVMs' instruction tables.}

\begin{figure}[!th]
    \centering
    \includegraphics[width=0.95\linewidth]{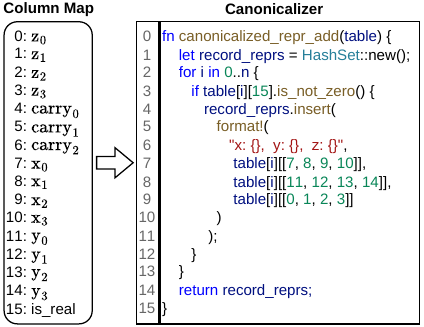}
    \caption{Pseudo-code of a canonicalizer ($\mathcal{R}$) for the ADD table. $\mathcal{R}$ projects raw trace tables into unique sets of semantic execution records, filtering out internal helper witnesses (e.g., carry bits) and padding rows (via the $\texttt{is\_real}$ selector). }
    \label{fig:canonicalizer}
\end{figure}

%This ensures a 1-to-1 mapping between the canonical trace space and the semantic execution history.

\subsection{Search Optimization}

\begin{lemma}[Soundness of \textsc{Conditional-Const-Gate}]
\label{lem:br-const-gate}
Let $\tilde{T}$ be an interval state, and let $s, v$ be cells of $\tilde{T}$
with $I_s = [1, 1]$. Suppose the constraint $s \cdot (v - \tau) = 0$ holds
over $\mathbb{F}_p$ for some constant $\tau \in \mathbb{F}_p$. Then the
refinement $I_v \leftarrow I_v \cap [\tau, \tau]$ is sound: for every
concrete $T \in \llbracket \tilde{T} \rrbracket_p$ satisfying the constraint,
$T$ remains in $\llbracket \tilde{T}' \rrbracket_p$, where $\tilde{T}'$ is
$\tilde{T}$ with $I_v$ replaced by the refined interval.
\end{lemma}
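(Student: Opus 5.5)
The plan is to fix an arbitrary concrete $T \in \llbracket \tilde{T} \rrbracket_p$ satisfying $s \cdot (v - \tau) = 0$ over $\mathbb{F}_p$. We then show that every cell of $T$ lies in the embedding of the corresponding cell of $\tilde{T}'$. All cells other than $v$ are unchanged between $\tilde{T}$ and $\tilde{T}'$, so membership there is inherited directly from $T \in \llbracket \tilde{T} \rrbracket_p$. Only the cell $v$ needs an argument.

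For the cell $s$, the hypothesis $I_s = [1,1]$ gives $\llbracket I_s \rrbracket_p = \{1 \bmod p\}$, so $T_s = 1$ in $\mathbb{F}_p$. Substituting into the constraint yields $1 \cdot (T_v - \tau) = 0$, hence $T_v = \tau$ in $\mathbb{F}_p$. The key field-theoretic point is that no zero-divisor issue arises, because the selector is exactly the unit $1$ rather than an arbitrary nonzero element.

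It remains to show $\tau \in \llbracket I_v \cap [\tau,\tau] \rrbracket_p$. This is the step I expect to be the main obstacle, because the embedding $\llbracket\cdot\rrbracket_p$ reduces integers mod $p$, while the intersection is taken over integer intervals. I will fix the convention that $\tau$ denotes its canonical representative in $[0,p-1]$.

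\textbf{Case 1: cells stay in $[0,p-1]$.} As in Algo.~\ref{alg:bnb_search}, all cells are initialized to subintervals of $[0,p-1]$ and only ever shrink. Then $I_v \subseteq [0,p-1]$, and reduction mod $p$ is injective there. From $T_v = \tau$ and $T_v \in \llbracket I_v \rrbracket_p$, the unique integer preimage of $T_v$ in $I_v$ is $\tau$ itself. Hence $\tau \in I_v$, so $I_v \cap [\tau,\tau] = [\tau,\tau]$ and $T_v \in \llbracket [\tau,\tau] \rrbracket_p$.

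\textbf{Case 2: general intervals.} If $I_v$ may be an arbitrary integer interval, the claim needs a more careful reading of the intersection. I would interpret $I_v \cap [\tau,\tau]$ as $[\tau',\tau']$ for the representative $\tau' \equiv \tau \pmod p$ lying in $I_v$; such a $\tau'$ exists precisely because $T_v \in \llbracket I_v\rrbracket_p$. Alternatively, I would state the lemma under the invariant $I_v \subseteq [0,p-1]$.

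In either case, if the intersection turns out empty, then no concrete $T$ in $\llbracket\tilde{T}\rrbracket_p$ satisfies the constraint. The claim then holds vacuously, which also justifies pruning such a state.

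Combining the per-cell facts gives $T \in \llbracket \tilde{T}' \rrbracket_p$, which proves the lemma.
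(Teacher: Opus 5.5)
Your proof is correct and follows the same route as the paper's: you fix $T$, use $I_s=[1,1]$ to get $T_s=1$ and hence $T_v=\tau$, and note that all other cells are unchanged. The paper simply asserts the final step $v\in\llbracket I_v\cap[\tau,\tau]\rrbracket_p$; your Case~1 invariant $I_v\subseteq[0,p-1]$, which holds for every state produced by Algo.~\ref{alg:bnb_search}, is exactly what justifies it, because for arbitrary integer intervals the integer intersection can be empty even though the mod-$p$ embeddings meet (e.g.\ $I_v=[p,p]$, $\tau=0$).
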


\begin{proof}
Fix any $T \in \llbracket \tilde{T} \rrbracket_p$ satisfying
$s \cdot (v - \tau) = 0$. Since $I_s = [1, 1]$, the concrete value of
$s$ in $T$ equals $1$, so the constraint reduces to $v - \tau = 0$ in
$\mathbb{F}_p$, i.e., $v = \tau$. Hence $v \in \llbracket [\tau, \tau]
\rrbracket_p$, and therefore $v \in \llbracket I_v \cap [\tau, \tau]
\rrbracket_p$. All other cells are unchanged, so
$T \in \llbracket \tilde{T}' \rrbracket_p$.
\end{proof}

\begin{lemma}[Soundness of \textsc{Conditional-Var-Gate}]
\label{lem:br-var-gate}
Let $\tilde{T}$ be an interval state, and let $s, a, b$ be cells of
$\tilde{T}$ with $I_s = [1, 1]$. Suppose the constraint
$s \cdot (a - b) = 0$ holds over $\mathbb{F}_p$. Then the refinement
$I_a, I_b \leftarrow I_a \cap I_b$ is sound.
\end{lemma}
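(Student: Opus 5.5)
The argument mirrors the proof of Lemma~\ref{lem:br-const-gate}. Fix an arbitrary concrete table $T \in \llbracket \tilde{T} \rrbracket_p$ that satisfies $s \cdot (a - b) = 0$ over $\mathbb{F}_p$. We must show that $T \in \llbracket \tilde{T}' \rrbracket_p$, where $\tilde{T}'$ is obtained from $\tilde{T}$ by replacing both $I_a$ and $I_b$ with $I_a \cap I_b$.

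\textbf{Step 1: the gate collapses to an equality.} Since $I_s = [1,1]$, the concrete value of $s$ in $T$ lies in $\llbracket [1,1] \rrbracket_p = \{1\}$. The constraint therefore reduces to $a - b = 0$, so $a = b$ in $\mathbb{F}_p$; call this common field value $x$.

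\textbf{Step 2: both intervals contain $x$.} Membership $T \in \llbracket \tilde{T} \rrbracket_p$ gives $x \in \llbracket I_a \rrbracket_p$ and $x \in \llbracket I_b \rrbracket_p$. Hence $x \in \llbracket I_a \rrbracket_p \cap \llbracket I_b \rrbracket_p$.

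\textbf{Step 3: the main obstacle.} We need to pass from $\llbracket I_a \rrbracket_p \cap \llbracket I_b \rrbracket_p$ to $\llbracket I_a \cap I_b \rrbracket_p$, where $I_a \cap I_b$ is the integer-interval meet. The inclusion $\llbracket I_a \cap I_b \rrbracket_p \subseteq \llbracket I_a \rrbracket_p \cap \llbracket I_b \rrbracket_p$ always holds, but we need the reverse. That reverse inclusion can fail under wrap-around: for example, $[p, p]$ and $[0, 0]$ both embed to $\{0\}$, yet their integer meet is empty.

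I would close this gap by restricting to canonical intervals, i.e.\ $I_a, I_b \subseteq [0, p-1]$. This is the regime the search actually maintains: the root state is initialized to $[0, p-1]$, and $\mathsf{SplitInterval}$ and intersections only shrink intervals. On this range, reduction mod $p$ is injective, so $\llbracket I \rrbracket_p$ is exactly the set of integers in $I$. Intersection then commutes with the embedding, giving $x \in \llbracket I_a \cap I_b \rrbracket_p$.

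Alternatively, one can read $\cap$ as intersection of embedded sets, consistent with the paper's Interval Disjointness notation, which is defined via $\llbracket \cdot \rrbracket_p$. Under that reading, Step 3 is immediate.

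\textbf{Step 4: conclusion.} The cells $a$ and $b$ of $T$ both lie in the refined interval, and every other cell is unchanged. Therefore $T \in \llbracket \tilde{T}' \rrbracket_p$. In particular, if $I_a \cap I_b$ is empty, no satisfying $T$ exists, so pruning the branch is sound as well.
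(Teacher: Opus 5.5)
Your proof follows the paper's argument: fix a satisfying $T$, use $I_s=[1,1]$ to reduce the gate to $a=b$, and place the common value in both embeddings and hence in the embedding of the refined interval. Your Step~3 goes beyond the paper, which simply asserts $\llbracket I_a\rrbracket_p \cap \llbracket I_b\rrbracket_p \subseteq \llbracket I_a\cap I_b\rrbracket_p$; as your $[p,p]$ versus $[0,0]$ example shows, that inclusion is false for general integer intervals, so your added hypothesis is genuinely needed for the lemma to hold with the integer meet. That hypothesis, that cell intervals stay within $[0,p-1]$, holds for the states Algorithm~1 generates from the $[0,p-1]$ root, since splits and backward refinements only shrink cell intervals.
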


\begin{proof}
Fix any $T \in \llbracket \tilde{T} \rrbracket_p$ satisfying
$s \cdot (a - b) = 0$. Since $I_s = [1, 1]$, we have $s = 1$ in $T$, and
the constraint reduces to $a = b$ in $\mathbb{F}_p$. Let $v$ denote this
common value. Then $v \in \llbracket I_a \rrbracket_p$ (since $a$ was
assigned from $I_a$) and $v \in \llbracket I_b \rrbracket_p$ (since
$b = a$ was assigned from $I_b$), so
$v \in \llbracket I_a \rrbracket_p \cap \llbracket I_b \rrbracket_p
\subseteq \llbracket I_a \cap I_b \rrbracket_p$. The same holds
symmetrically for $b$, so $T \in \llbracket \tilde{T}' \rrbracket_p$.
\end{proof}

\begin{lemma}[Soundness of \textsc{Affine-Backward}]
\label{lem:br-affine}
Let $\tilde{T}$ be an interval state, let $L$ be an affine expression
over cells of $\tilde{T}$, and let $a, e$ be cells with
$a, e \notin \mathrm{FV}(L)$. Let $d > 0$ be a constant, and let
$I_a = [a_{lo}, a_{hi}]$, $I_L = [\ell, h]$, $I_e = [e_{lo}, e_{hi}]$ be
the current intervals. Suppose the side condition
\[
|\ell| + |h| + d(|e_{lo}| + |e_{hi}|) + |a_{lo}| + |a_{hi}| < p
\tag{$\star$}
\]
holds, and the constraint $a = L - d \cdot e$ holds over $\mathbb{F}_p$.
Then the refinement
\[
I_e \;\leftarrow\; I_e \cap \left[
\left\lfloor \frac{\ell - a_{hi}}{d} \right\rfloor,\;
\left\lfloor \frac{h - a_{lo}}{d} \right\rfloor
\right]
\]
is sound.
\end{lemma}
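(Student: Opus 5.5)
The plan mirrors the proofs of Lemmas~\ref{lem:br-const-gate} and~\ref{lem:br-var-gate}: fix an arbitrary concrete $T \in \llbracket \tilde{T} \rrbracket_p$ satisfying $a = L - d\cdot e$ over $\mathbb{F}_p$, and show that the value of $e$ in $T$ lies in $\llbracket I_e \cap [\lfloor (\ell - a_{hi})/d \rfloor, \lfloor (h - a_{lo})/d \rfloor] \rrbracket_p$. All other cells are untouched, so $T \in \llbracket \tilde{T}' \rrbracket_p$ follows immediately. The key idea is to lift the field equation to an exact integer equation using the side condition $(\star)$, and then solve it for $e$ with ordinary integer reasoning.

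\textbf{Step 1: integer representatives.} By the definition of $\llbracket \cdot \rrbracket_p$, each cell value of $T$ has an integer representative in its interval. Choose $x_a \in [a_{lo}, a_{hi}]$ and $x_e \in [e_{lo}, e_{hi}]$ for $a$ and $e$, and a representative in the interval of each cell occurring in $L$. Let $x_L$ be the integer evaluation of $L$ on these representatives. I will invoke the standard enclosure property of the interval arithmetic of Appendix~\ref{appendix:subsec:ia}: evaluating an expression on integers drawn from intervals yields an integer inside the interval result. This gives $x_L \in [\ell, h]$, so $|x_L| \le \max(|\ell|, |h|) \le |\ell| + |h|$. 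Here the hypothesis $a, e \notin \mathrm{FV}(L)$ matters: $x_L$ does not depend on the choices of $x_a$ and $x_e$. The coefficients of $L$ must also be read as integers, which is how the interval evaluator treats them.

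\textbf{Step 2: from congruence to equality.} This is the main obstacle. The constraint only gives $x_a - x_L + d\,x_e \equiv 0 \pmod p$. The bounds from Step~1 give
\[
|x_a - x_L + d\,x_e| \le |x_a| + |x_L| + d\,|x_e| \le (|a_{lo}| + |a_{hi}|) + (|\ell| + |h|) + d(|e_{lo}| + |e_{hi}|) < p,
\]
using $|x| \le \max(|lo|, |hi|) \le |lo| + |hi|$ for $x \in [lo, hi]$. The only multiple of $p$ with absolute value below $p$ is $0$, so $x_a = x_L - d\,x_e$ holds exactly in $\mathbb{Z}$.

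\textbf{Step 3: solving for $e$.} Rearranging gives $d\,x_e = x_L - x_a$, and combining $x_L \in [\ell, h]$ with $x_a \in [a_{lo}, a_{hi}]$ yields
\[
\ell - a_{hi} \le d\,x_e \le h - a_{lo}.
\]
Dividing by $d > 0$ gives $x_e \ge (\ell - a_{hi})/d \ge \lfloor (\ell - a_{hi})/d \rfloor$; the floor here is conservative, since a ceiling would be tighter. Since $x_e$ is an integer, $x_e \le (h - a_{lo})/d$ implies $x_e \le \lfloor (h - a_{lo})/d \rfloor$. Together with $x_e \in I_e$, this places $x_e$ in the refined interval, so $e = x_e \bmod p$ lies in its field embedding, which completes the argument.
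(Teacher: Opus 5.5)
Your proof is correct and follows essentially the same route as the paper's: choose integer witnesses for $a$, $e$, and the free variables of $L$; use $(\star)$ and the triangle inequality to turn the congruence into an exact integer equation; then bound $d\,x_e$ between $\ell - a_{hi}$ and $h - a_{lo}$ and use integrality to justify the floors. Your extra remarks are accurate refinements of steps the paper leaves implicit: the enclosure property for the integer evaluation of $L$, reading $L$'s coefficients as integers so that $x_L$ reduces to $L$'s field value, and the observation that a ceiling would give a tighter lower bound.
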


\begin{proof}
Fix any $T \in \llbracket \tilde{T} \rrbracket_p$ satisfying
$a = L - d \cdot e$ in $\mathbb{F}_p$. By definition of
$\llbracket \cdot \rrbracket_p$, there exist integer witnesses
$\hat{a} \in [a_{lo}, a_{hi}]$ and $\hat{e} \in [e_{lo}, e_{hi}]$
reducing modulo $p$ to the values of $a$ and $e$ in $T$, and an
integer witness $\hat{L} \in [\ell, h]$ obtained by evaluating $L$
over the integer witnesses of its free variables (well-defined
since $a, e \notin \mathrm{FV}(L)$). By the side condition $(\star)$, the triangle inequality gives
\[
|\hat{L} - d \cdot \hat{e} - \hat{a}|
\leq |\hat{L}| + d \cdot |\hat{e}| + |\hat{a}|
\leq |\ell| + |h| + d(|e_{lo}| + |e_{hi}|) + |a_{lo}| + |a_{hi}| < p,
\]
so $\hat{L} - d \cdot \hat{e} - \hat{a} \in (-p, p)$. Since this integer
quantity reduces to $0$ modulo $p$ (by the constraint), it must equal
$0$ in $\mathbb{Z}$. Hence $\hat{a} = \hat{L} - d \cdot \hat{e}$ holds
over $\mathbb{Z}$, which rearranges to
$\hat{e} = (\hat{L} - \hat{a}) / d$.
Applying interval arithmetic over $\mathbb{Z}$, we have
$\hat{L} - \hat{a} \in [\ell - a_{hi},\; h - a_{lo}]$. Dividing by
$d > 0$ gives $\hat{e} \in [(\ell - a_{hi})/d,\; (h - a_{lo})/d]$. Since
$\hat{e}$ is an integer,
\[
\hat{e} \in
\left[\left\lfloor \frac{\ell - a_{hi}}{d} \right\rfloor,\;
\left\lfloor \frac{h - a_{lo}}{d} \right\rfloor\right].
\]
Combined with $\hat{e} \in \llbracket I_e \rrbracket_p$, we conclude
$\hat{e}$ lies in the refined interval, so
$T \in \llbracket \tilde{T}' \rrbracket_p$.
\end{proof}

\section{Experiment Details}
\label{appendix:experiemnt-details}

\paragraph{Configurations of SMT Baseline}

For the SMT baseline, each trace cell is materialized as a 32-bit bit-vector variable, and every AIR constraint is translated into its bit-vector equivalent, with field-prime equalities reduced via $\mathtt{bvurem}$.  An SMT solver, however, is an existential procedure: it returns a single satisfying assignment or certifies that none exists, but it cannot directly decide whether a constraint set is {tightly} determined by its inputs.  To turn it into an under-constraint detector, we first execute the zkVM's reference emulator to obtain an honest input--output pair $(\mathbf{x}^\star,\mathbf{y}^\star)$, then append a {blocking constraint} that excludes $\mathbf{y}^\star$ from the solver's output space.  The resulting query is satisfiable iff there exists an alternative trace that agrees with the honest input $\mathbf{x}^\star$ and the AIR but disagrees with $\mathbf{y}^\star$, i.e., a witness of under-constrained behavior; otherwise the query is unsatisfiable, certifying that no such bug exists for that input.  Each per-table formula is dispatched to Z3 under the same wall-clock timeout used for \sys.

\paragraph{Configurations of Fuzzer Baseline}

We use Arguzz, the state-of-the-art fuzzer for zkVMs~\cite{hochrainer2025arguzz}, as the common driver across all five targets. Arguzz combines metamorphic testing of zkVM circuit executions with controlled fault injection into the host-side executor, which lets it expose both completeness violations (the prover refuses an honestly executed program) and soundness violations (the verifier accepts a tampered execution).

SP1 and Pico are supported by the original Arguzz release, and we additionally implemented Arguzz front-ends for other Sphinx, Ziren, and Valida.   For each new target, we follow the extension architecture prescribed by Arguzz: a description of the target ISA's opcodes, an enumeration of the fault sites exposed by the host executor, a project generator that emits
a host/guest pair against the zkVM's SDK, and a small patch to the executor that realizes the fault primitives. For Ziren we replace Arguzz's RISC-V opcode table with the MIPS32r2 instruction set and retarget the syscall-identifier injection to MIPS' syscall convention.
For Valida whose ISA is a custom one, we also support basic transferable injection rules: program-counter modification, ALU result and operand modification, instruction replay, and syscall-identifier modification. In all cases we apply the safe remainder/division transformation provided by Arguzz to avoid spurious divide-by-zero crashes that would otherwise dominate the coverage signal.

All experiments are run with the same random seeds, on identical hardware to the one used for our \sys, and for the same wall-clock budget of six hours per target.

\paragraph{Input Range Verification via SMT solver}
\label{appendix:smt:range}

We also conduct the input range verification via the SMT solver. Specifically, the blocking-closure mechanism is repurposed accordingly: the verification starts from applying Z3 solver to the constraints without blocking closure, and each iteration extracts the satisfying input and output from Z3's returned model and appends a blocking constraint forbidding that input-output pair. The loop terminates once Z3 returns $\mathtt{unsat}$, where every input in $V$ admitting is expected to be enumerated. 

Tab.~\ref{tab:rq-appendix} reports the resulting speedups over the point-wise verification with the Z3 SMT solver, averaged over all opcodes per zkVM. Three observations stand out. First, the speedup never reaches $1\times$: across every zkVM and every $|V|$, the SMT range query is {slower} than the corresponding $|V|$ independent point queries. This is structural rather than incidental to our encoding. Blocking-clause enumeration must, by construction, issue at least $|V|$ \texttt{sat} calls (one per witness in $V$) plus a final \texttt{unsat} call to certify exhaustion, while accumulating $|V|$ blocking literals that progressively dilate each subsequent solve. Second, the trend is non-monotonic, peaking near $|V|\!=\!2^6$: for small $|V|$ the per-query process and parser overheads dominate the ratio, whereas for large $|V|$ the cumulative blocking-clause bloat slows late iterations enough to outweigh the amortization. Third, the largest volume $|V|\!=\!2^{14}$ exhausts the solver budget on every zkVM, and Valida fails to produce data for any $|V|$: sequential SMT enumeration does not scale to input regions of practical interest.

\begin{table}[!th]
\centering
\caption{
Speedup of Z3 SMT solver's \rangeverification over the iterative point-wise baseline, across input volumes $|V|$. T/O denotes the timeout. 
}
\label{tab:rq-appendix}
\begin{tabular}{c|cccc}
\toprule
    zkVM & $|V|=2^2$ & $|V|=2^6$ & $|V|=2^{10}$ & $|V|=2^{14}$ \\ \hline
Valida & T/O & T/O & T/O & T/O \\
Sphinx & 0.4$\times$ & 0.6$\times$ & 0.5$\times$ & T/O \\
Pico   & 0.6$\times$ & 0.6$\times$ & 0.5$\times$ & T/O \\
SP1    & 0.6$\times$ & 0.7$\times$ & 0.5$\times$ & T/O \\
Ziren  & 0.5$\times$ & 0.7$\times$ & 0.5$\times$ & T/O \\
 \hline
\end{tabular}
\end{table}
\else
  % Define appendix reference numbers without typesetting the appendix.
  \appendix

  % ------------------------------------------------------------
  % Appendix A: Overview of zkVMs
  % ------------------------------------------------------------
  \refstepcounter{section} % A
  \label{appendix:sp}

  \refstepcounter{table} % Table 6
  \label{tab:density}

  \refstepcounter{figure} % Figure 6
  \label{fig:examples-constraints}

  % ------------------------------------------------------------
  % Appendix B: Design Details
  % ------------------------------------------------------------
  \refstepcounter{section} % B

  \refstepcounter{figure} % Figure 7
  \label{fig:placeholder}

  \refstepcounter{subsection} % B.1
  \label{appendix:lookup}

  \refstepcounter{subsection} % B.2
  \label{appendix:subsec:ia}

  % Equations (11)--(17).
  % eq:abc refers to the final line of the byte-lookup definition,
  % which is Equation (17) in the current paper.
  \refstepcounter{equation} % 11
  \refstepcounter{equation} % 12
  \refstepcounter{equation} % 13
  \refstepcounter{equation} % 14
  \refstepcounter{equation} % 15
  \refstepcounter{equation} % 16
  \refstepcounter{equation} % 17
  \label{eq:abc}

  \refstepcounter{subsection} % B.3

  \refstepcounter{table} % Table 7
  \label{tab:star}

  \refstepcounter{table} % Table 8
  \label{tab:diamond}

  \refstepcounter{table} % Table 9
  \label{tab:cf}

  \refstepcounter{subsection} % B.4
  \label{subsec:proof}

  % ------------------------------------------------------------
  % Appendix C: Implementation Details
  % ------------------------------------------------------------
  \refstepcounter{section} % C
  \label{appendix:implementation}

  \refstepcounter{subsection} % C.1
  \label{app:canonicalization}

  \refstepcounter{figure} % Figure 8
  \label{fig:canonicalizer}

  \refstepcounter{subsection} % C.2

  \refstepcounter{theorem} % Lemma C.1
  \label{lem:br-const-gate}

  \refstepcounter{theorem} % Lemma C.2
  \label{lem:br-var-gate}

  \refstepcounter{theorem} % Lemma C.3
  \label{lem:br-affine}

  % ------------------------------------------------------------
  % Appendix D: Experiment Details
  % ------------------------------------------------------------
  \refstepcounter{section} % D
  \label{appendix:experiemnt-details}

  % This label occurs after an unnumbered paragraph, so it still
  % resolves to Appendix D.
  \label{appendix:smt:range}

  \refstepcounter{table} % Table 10
  \label{tab:rq-appendix}
\fi

\end{document}